% ====================================================================
% $Id:$
% $Log:$
% ====================================================================
%\documentstyle[llncsdoc]{llncs}
%\documentclass{llncs}
\documentclass[letterpaper,12pt]{article}
\usepackage{amssymb, amsmath}
\usepackage{amsthm}
\usepackage{algorithm}
\usepackage{algorithmic}
\usepackage{enumerate}
\usepackage[numbers,square,comma,sort&compress]{natbib}
\usepackage[pdftex,colorlinks]{hyperref}
%\usepackage{chngcntr}
%\usepackage{apptools}
%\usepackage{hyperref}
%\usepackage{pstricks-add}
%\usepackage{pgf,tikz}
%\usepackage{pgfmath}
%\usepackage{url}
%\usepackage{CJK}
%\usepackage[small,compact]{titlesec}
%\usepackage[small,it]{caption}
%\usepackage[small,compact]{titlesec}
%\usepackage{graphicx}
% =============== Squeezing considerations =================
%\usepackage{times}
%\usepackage{mathtime}
\usepackage{a4}
%\def\lnk{\htmladdnormallink}

% ==== [Begin: personalized definitions] ===================
%\newcommand{\remove}[1]{}
\newtheorem{theorem}{Theorem}
\newtheorem{fact}[theorem]{Fact}
\newtheorem{lemma}[theorem]{Lemma}

\newtheorem*{chebyshev}{Chebyshev's inequality}
\newcommand{\comment}[1]{}

%\AtAppendix{\counterwithin{theorem}{section}}

% ==== [ End:  personalized definitions] ===================

\begin{document}
%\title{Efficient approximations of closeness centrality}
% \footnote{The authors
%are supported in part by the National Science Council of Taiwan
%under grant 97-2221-E-002-096-MY3 and Excellent Research Projects of
%National Taiwan University under grant
%%97R0062-05.
%98R0062-05.
%}}
%\title{Sublinear bounds for $1$-median selection in metric spaces}
\title{A Las Vegas approximation algorithm for metric $1$-median selection}
%\title{Metric $1$-median selection via a new pseudo-metric}
%\title{Deterministic sublinear-time algorithms for $1$-median selection}
%\text{http://par.cse.nsysu.edu.tw/$\sim$algo/paper/paper13/10.pdf}}}
%\title{On the approximability of metric $1$-median selection by nonevasive algorithms}
%\title{Metric $1$-median selection revisited}

\author{
Ching-Lueh Chang \footnote{Department of Computer Science and
Engineering,
%\& Innovation Center for Big Data and Digital Convergence,
Yuan Ze University, Taoyuan, Taiwan. Email:
clchang@saturn.yzu.edu.tw}
\footnote{Supported
in part by the Ministry of Science and Technology of Taiwan under
grant 105-2221-E-155-047-.}
%\footnote{Innovation Center for Big Data and Digital Convergence, Yuan Ze University, Taoyuan, Taiwan.}
%\footnote{Supported in part by the Ministry of Science and Technology
%of Taiwan under
%grant
%101-2221-E-155-015-MY2.}
%103-2221-E-155-026-MY2.}
}

%\institute{
%Department of Computer Science and Information Engineering, National Taiwan
%University, Taipei, Taiwan.\\
%\email{d95007@csie.ntu.edu.tw}
%\and
%Department of Computer Science and Information Engineering, National Taiwan
%University, Taipei, Taiwan.\\
%\email{lyuu@csie.ntu.edu.tw}
%}

%\author{Ching-Lueh Chang\thanks{Department of Computer Science and
%Information Engineering, National Taiwan University, Taipei,
%Taiwan. Email: d95007@csie.ntu.edu.tw.} \and Yuh-Dauh Lyuu\thanks{Department
%of Computer Science and Information
%Engineering, National Taiwan University, Taipei,
%Taiwan. Email: lyuu@csie.ntu.edu.tw.}}
\maketitle

\begin{abstract}
%Consider the problem of finding a point
%in a given $n$-point metric space
%with the minimum
%average
%sum of distances
%distance
%to all
%points in a given $n$-point metric space.
Given
an $n$-point
%a
metric
space,
%space
%$(\{1,2,\ldots,n\},d)$,
%let {\sc metric $1$-median} ask for
consider the problem of finding
a point with the minimum
%average distance
sum of distances
to all points.
%$\mathop{\mathrm{argmin}}_{y=1}^n\,\sum_{x=1}^n\,d(y,x)$,
%breaking ties arbitrarily.
We show that
%{\sc metric $1$-median}
this problem
has a randomized algorithm that
{\em always} outputs a $(2+\epsilon)$-approximate solution
in an expected $O(n/\epsilon^2)$ time
for each constant $\epsilon>0$.
%Furthermore,
Inheriting Indyk's~\cite{Ind00} algorithm,
our
%Our
algorithm
%also
outputs
a $(1+\epsilon)$-approximate
%solution
$1$-median
in $O(n/\epsilon^2)$ time
with probability $\Omega(1)$.
\comment{ % no, this has been done by Kfir Barhum Oded Goldreich Adi Shraibman 20170129 23:11
As a by-product,
%of our derivations,
we present a Monte-Carlo $O(n/\epsilon^2)$-time algorithm for estimating
the sum of all distances in an $n$-point metric space
%can be estimated
to within
%an additive error of
a ratio in $[\,1-\epsilon,1+\epsilon\,]$, for each constant $\epsilon>0$.
}% no, this has been done by Kfir Barhum Oded Goldreich Adi Shraibman 20170129 23:11
\end{abstract}

\section{Introduction}

A metric space is a nonempty set $M$ endowed with a
metric,
%$d$,
i.e.,
a function
$d\colon M\times M\to[\,0,\infty\,)$
%satisfying
such that
\begin{itemize}
\item $d(x,y)=0$ if and only if $x=y$ (identity of indiscernibles),
\item $d(x,y)=d(y,x)$ (symmetry), and
\item $d(x,y)+d(y,z)\ge d(x,z)$ (triangle inequality)
\end{itemize}
%$d(x,x)=0$, $d(x,y)>0$, $d(x,y)=d(y,x)$
%and $d(x,y)+d(y,z)\ge d(x,z)$
for all
%distinct
$x$, $y$, $z\in M$~\cite{Rud76}.
%Above, call
%Call
%$d$ a metric.

For all $n\in\mathbb{Z}^+$, define $[n]\equiv\{1,2,\ldots,n\}$.
Given
$n\in\mathbb{Z}^+$
and oracle access to a metric $d\colon [n]\times[n]\to[\,0,\infty\,)$,
%an $n$-point
%a metric space $(\{1,2,\ldots,n\},d)$,
{\sc metric $1$-median}
asks for
%a point with the minimum average distance to all points.
$\mathop{\mathrm{argmin}}_{y\in[n]}\,\sum_{x\in[n]}\,d(y,x)$,
breaking ties arbitrarily.
It generalizes the classical median selection on the real line
and has a
%Its
brute-force
%exact algorithm takes $\Theta(n^2)$ time.
$\Theta(n^2)$-time algorithm.
%With $\binom{n}{2}=\Theta(n^2)$ nonzero distances
More generally, {\sc metric $k$-median} asks for
$c_1$, $c_2$, $\ldots$,
%$c_k\in\{1,2,\ldots,n\}$
$c_k\in[n]$
minimizing
%$\sum_{p\in[n]}\,\min_{i=1}^k\,d(p,c_i)$.
$\sum_{x\in[n]}\,\min_{i=1}^k\,d(x,c_i)$.
%It is polynomial-time $(1+\sqrt{3}+\epsilon)$-approximable to within~\cite{LS16}
Because
%there are
$d(\cdot,\cdot)$ defines
$\binom{n}{2}=\Theta(n^2)$ nonzero distances,
%a sublinear-time algorithm refers to one taking $o(n^2)$ time.
only $o(n^2)$-time algorithms are said to
run in
%take
sublinear time~\cite{Ind99}.
%``sublinear'' means ``$o(n^2)$''~\cite{Ind99}.
%It
For all $\alpha\ge1$,
%An
an
$\alpha$-approximate
%solution to {\sc metric $1$-median}
$1$-median
is a point $p\in[n]$
%with
satisfying
$$\sum_{x\in[n]}\,d\left(p,x\right)\le\alpha\cdot
\min_{y\in[n]}\,\sum_{x\in[n]}\,d\left(y,x\right).$$

For all $\epsilon>0$,
{\sc metric $1$-median}
has
%Indyk~\cite{Ind99, Ind00} gives
a Monte Carlo $(1+\epsilon)$-approximation
$O(n/\epsilon^2)$-time
%algorithm for
algorithm~\cite{Ind99, Ind00}.
%{\sc metric $1$-median}.
%all $\epsilon>0$~\cite{Ind99, Ind00}.
Guha et al.~\cite{GMMMO03}
show that {\sc metric $k$-median}
has
%give
a Monte Carlo, $O(\exp(O(1/\epsilon)))$-approximation,
$O(nk\log n)$-time, $O(n^{\epsilon})$-space and one-pass algorithm for all small $k$
as well as a deterministic, $O(\exp(O(1/\epsilon)))$-approximation,
$O(n^{1+\epsilon})$-time, $O(n^{\epsilon})$-space and one-pass algorithm.
Given
$n$ points in $\mathbb{R}^D$ with $D\ge 1$,
%Kumar et al.~\cite{KSS10}
%show that,
%design Monte Carlo algorithms that,
%given
%Given
%$n$ points in $\mathbb{R}^D$ with $D\ge 1$,
the Monte Carlo algorithms of Kumar et al.~\cite{KSS10}
%finds
find
a $(1+\epsilon)$-approximate $1$-median
%can be found
in
$O(D\cdot\exp(1/\epsilon^{O(1)}))$ time
%with probability $\Omega(1)$.
and a $(1+\epsilon)$-approximate solution to {\sc metric $k$-median}
in $O(Dn\cdot\exp((k/\epsilon)^{O(1)}))$ time.
%A series of works of Chang~\cite{Cha13,Cha15} and Wu~\cite{Wu14}
All randomized $O(1)$-approximation algorithms for {\sc metric $k$-median}
take $\Omega(nk)$ time~\cite{MP04, GMMMO03}.
Chang~\cite{Cha15}
shows that {\sc metric $1$-median} has
%gives
a deterministic, $(2h)$-approximation, $O(hn^{1+1/h})$-time
and nonadaptive algorithm for
%{\sc metric $1$-median}, where
all
constants
$h\in\mathbb{Z}^+\setminus\{1\}$, generalizing the results of Chang~\cite{Cha13} and
Wu~\cite{Wu14}.
%Furthermore,
On the other hand,
%Chang
he
%refutes
disproves
%proves
the
existence
%non-existence
of
%a
deterministic $(2h-\epsilon)$-approximation $O(n^{1+1/(h-1)}/h)$-time
%algorithm
algorithms
for all constants $h\in\mathbb{Z}^+\setminus\{1\}$ and $\epsilon>0$~\cite{Cha16COCOON, Cha17}.

In social network analysis, the closeness centrality of a point $v$
is
the reciprocal of the
%sum of distances
average distance
from $v$ to all
%other
points~\cite{WF94}.
%$n/\sum_{}$
So {\sc metric $1$-median}
%can be
%understood
%stated as finding
asks for
a point with the maximum closeness
centrality.
Given oracle access to a graph metric,
the Monte-Carlo algorithms of
Goldreich and Ron~\cite{GR08} and Eppstein and Wang~\cite{EW04}
estimate the closeness centrality of a given point and those of all points, respectively.

All known
%randomized
sublinear-time
%(or expected sublinear time)
algorithms
%(excluding
%deterministic algorithms)
%the deterministic ones, which are randomized in a degenerate sense)
for {\sc metric $1$-median}
are
either deterministic or
Monte Carlo,
the latter having
%in that they have
a positive probability of failure.
For example, Indyk's Monte Carlo $(1+\epsilon)$-approximation algorithm
%has a positive probability of outputting
outputs
with a positive probability
a solution
%that
%is, with a positive probability
%is
%not $(1+\epsilon)$-approximate.
without approximation guarantees.
%with a positive probability.
In contrast,
%this paper shows
we show
that {\sc metric $1$-median} has
%a Las Vegas $(2+\epsilon)$-approximation
a randomized
algorithm
that {\em always} outputs a
$(2+\epsilon)$-approximate solution
%with
%within
in
%an
expected $O(n/\epsilon^2)$
%running
time
for all constants $\epsilon>0$.
So,
excluding
%So our algorithm is Las Vegas.
%Excluding
the
%existing
known
deterministic algorithms (which are Las Vegas only in the degenerate
%sense),
sense),
%we have
this paper gives
the {\em first} Las Vegas approximation algorithm for {\sc metric $1$-median}
with an expected
%$o(n^2)$
sublinear
running
time.
%our algorithm is the first Las Vegas approximation algorithm for {\sc metric $1$-median}.
%So our algorithm {\em always} outputs a
%$(2+\epsilon)$-approximate solution.
Note that
%the best approximation ratio
%achievable by
deterministic
sublinear-time
%$o(n^2)$-time
algorithms for {\sc metric $1$-median}
can be $4$-approximate but not $(4-\epsilon)$-approximate for any constant
$\epsilon>0$~\cite{Cha13, Cha17}.
%is $4$~\cite{Cha13, Cha15}.
So our
%Las Vegas
%algorithms with an expected sublinear running time
%algorithm's
%achieves
%has a better
%can achieve a better
approximation ratio
of $2+\epsilon$
%is better than
beats
that of
any
%all
deterministic
sublinear-time
%algorithms.
algorithm.
%Furthermore,
%Like Indyk's algorithm,
%Our algorithm
%invokes
%By invoking
Inheriting
%that of Indyk
Indyk's algorithm,
%Inheriting
%from
%Indyk's algorithm,
our algorithm
outputs a $(1+\epsilon)$-approximate $1$-median in
$O(n/\epsilon^2)$ time with probability $\Omega(1)$ for all constants $\epsilon>0$.
%, inheriting Indyk's algorithm.
\comment{ % maybe no need to say this 20170130 19:38
In case our algorithm
%does not
fails to
output a
$(1+\epsilon)$-approximate $1$-median, it
%still
nonetheless
outputs a $(2+\epsilon)$-approximate
$1$-median.
}% maybe no need to say this 20170130 19:38

\comment{ % no, this has been done by Kfir Barhum Oded Goldreich Adi Shraibman 20170129 23:12
As a by-product of our derivations,
we present a Monte Carlo $O(n/\epsilon^2)$-time algorithm for estimating $\sum_{u,v\in [n]}\,d(u,v)$
to within an additive error of $\epsilon\cdot\sum_{u,v\in [n]}\,d(u,v)$
with an $\Omega(1)$ probability of success, for each constant $\epsilon>0$.
Previously, the best algorithm for such estimation
%takes
needs
$O(n/\epsilon^{7/2})$ time~\cite[Sec.~8]{Ind99}.
}% no, this has been done by Kfir Barhum Oded Goldreich Adi Shraibman 20170129 23:12

\comment{ % deterministic algorithms are also Las Vegas... 20170101
So
%our algorithm is
we have
the first Las Vegas
%$O(1)$-approximation
approximation
algorithm
for {\sc metric $1$-median}
with an expected sublinear running time.
Because
the best approximation ratio
achievable by deterministic sublinear-time algorithms for {\sc metric $1$-median}
is $4$~\cite{Cha13, Cha15},
Las Vegas
approximation
algorithms
have a better approximation ratio.
}% deterministic algorithms are also Las Vegas... 20170101

Below is
%Intuitively,
our
%the
%a
%Our
%proof
high-level
%idea
and inaccurate
sketch of proof,
%proof
%idea behind our Las Vegas algorithm for {\sc metric $1$-median},
%is as follows,
where $\epsilon$,
$\delta>0$
%$\delta\in(0,1)$
%and $\delta\in (0,\epsilon/2)$
are small constants:
\begin{enumerate}[(i)]
\item\label{thefirststep_runIndyk} Run Indyk's algorithm to find a probably $(1+\epsilon/10^{10})$-approximate $1$-median, $z$.
Then let $r=\sum_{x\in[n]}\,d(z,x)/n$ be the average distance from $z$
to all points.
\item\label{farawaypointsaredirectlyruledout}
For all $R>0$, denote
%Denote
by $B(z,R)$ the open ball with center $z$ and radius $R$.
%, where $R>0$.
Use the triangle inequality (with details omitted here) to show
%Exclude
$z$ to be a
%better
solution
no worse
than the points
%the solutions
in $[n]\setminus B(z,8r)$,
i.e.,
%from the candidates of
%to not be
%good approximate
%$(2+\epsilon)$-approximate.
%to be much worse than $z$.
%In particular,
\begin{eqnarray}
\sum_{x\in[n]}\,d\left(z,x\right) \le
%\frac{1}{7}\cdot
\inf_{y\in[n]\setminus B(z,8r)}\,
\sum_{x\in[n]}\,d\left(y,x\right).
\label{farawaypointsarebadanyway}
\end{eqnarray}
%solutions.
\item\label{lowerbounditem}
%Assume $|B(z,\delta nr)|\equiv 0\pmod{2}$ for simplicity and take
Take
a uniformly random bijection
%$\{\pi(2i-1),\pi(2i)\}_{i=1}^{|B(z,\delta nr)|/2}$
$\pi\colon [\,|B(z,\delta nr)|\,]\to B(z,\delta nr)$.
Then observe that
%for all $y\in B(z,8r)$,
{\small % the inequalities are long but we need a label
\begin{eqnarray}
\min_{y\in B(z,8r)}\,
\sum_{x\in B(z,\delta nr)}\,d\left(y,x\right)
&\ge&
\min_{y\in B(z,8r)}\,
\sum_{i=1}^{\lfloor|B(z,\delta nr)|/2\rfloor}\,
\left(d\left(y,\pi\left(2i-1\right)\right)+d\left(y,\pi\left(2i\right)\right)\right)
\,\,\,\,\,\,\,\,\,\label{anyclosenesscentralityhassizeatleastthatofamatchingfirst}\\
&\ge&
%\mathop{\mathrm E}[
\sum_{i=1}^{\lfloor|B(z,\delta nr)|/2\rfloor}\, d\left(\pi\left(2i-1\right),\pi\left(2i\right)\right),
\label{anyclosenesscentralityhassizeatleastthatofamatching}
%]\\
%&\le& \mathop{\mathrm E}[\sum_{i=1}_{|B(z,\delta nr)|/2}\, d(z,\pi(2i-1))+d(z,\pi(2i))]\\
%&=&\frac{|B(z,\delta nr)|}{2}\cdot
\end{eqnarray}
}% the inequalities are long but we need a label
where the first (resp., second) inequality follows from the injectivity of $\pi$ (resp.,
the triangle inequality).
%So we have a lower bound on $\min_{y\in B(z,8r)}\,\sum_{x\in B(z,\delta nr)}\,d(y,x)$.
%For this lower bound to be useful, prove a concentration result for $\sum_{i=1}^{|B(z,\delta nr)|/2}\, d(\pi(2i-1),\pi(2i))$.
%\item Show that
%$\sum_{i=1}_{|B(z,\delta nr)|/2}\, d(\pi(2i-1),\pi(2i)\right)$
%$$
%\mathop{\mathrm E}\left[
%\sum_{i=1}_{|B(z,\delta nr)|/2}\, d\left(\pi\left(2i-1\right),\pi\left(2i\right)\right).
%\right]\\
%$$
\item\label{thehardeststep} Assume $B(z,\delta nr)=[n]$ for simplicity.
So by
%By
%items~(\ref{farawaypointsaredirectlyruledout})--(\ref{lowerbounditem}),
inequalities~(\ref{farawaypointsarebadanyway})--(\ref{anyclosenesscentralityhassizeatleastthatofamatching}),
if the following inequality holds, then it serves as a
witness that $z$ is $(2+\epsilon)$-approximate:
\begin{eqnarray}
\sum_{x\in B(z,\delta nr)}\, d\left(z,x\right)\le
\left(2+\epsilon\right)\cdot \sum_{i=1}^{\lfloor|B(z,\delta nr)|/2\rfloor}\,
d\left(\pi\left(2i-1\right),\pi\left(2i\right)\right).
\label{intuitivegoal}
\end{eqnarray}
\footnote{Assuming $B(z,\delta nr)=[n]$,
inequalities~(\ref{anyclosenesscentralityhassizeatleastthatofamatchingfirst})--(\ref{intuitivegoal})
imply $\sum_{x\in[n]}\,d(z,x)\le (2+\epsilon)\cdot\sum_{x\in[n]}\,d(y,x)$
for all $y\in B(z,8r)$.
Furthermore,
%,
%says that $z$ is a solution no worse
%than those in $[n]\setminus B(z,8r)$.
%gives
$\sum_{x\in[n]}\,d(z,x)\le \sum_{x\in[n]}\,d(y,x)$
for all $y\in [n]\setminus B(z,8r)$ by inequality~(\ref{farawaypointsarebadanyway}).}
%So the question is to prove the probability that the above inequality (or its close variant)
%holds.
To guarantee outputting a $(2+\epsilon)$-approximate
%solution,
$1$-median,
output $z$
only when inequality~(\ref{intuitivegoal}) holds.
Restart from item~(\ref{thefirststep_runIndyk}) whenever
inequality~(\ref{intuitivegoal}) is false.
\end{enumerate}

%Below
%is a
%are more
More
details
%A
%more detailed
%The above
%description
of
%item~(\ref{thehardeststep}):
item~(\ref{thehardeststep}) follow:
%is oversimplified and inaccurate:
%Item~(\ref{thehardeststep})
%%has to adapt
%is more complex than
%%we have
%%said:
%%briefed:
%briefed above:
%We actually use a variant of inequality~(\ref{intuitivegoal}).
%and analyze that probability that it holds by a concentration bound for $\sum_{i=1}^{|B(z,\delta nr)|/2}\, d(\pi(2i-1),\pi(2i))$.
%is used.
%Its probability
%When $B(z,\delta nr)\neq[n]$, item~(\ref{thehardeststep}) will
%be more complex
%complicate:
%Furthermore, we will need
For a $1$-median $z'$ of $B(z,\delta nr)$,
%We will
%need
%prove
%have
%we will
it will be easy to
show
%that a $1$-median $z'$ of $B(z,\delta nr)$
%We shall consider
%a point $z'$ satisfying
%satisfies
\begin{eqnarray}
\sum_{x\in B(z,\delta nr)}\, d\left(z',x\right)\le
\left(2+o(1)\right)\cdot \mathop{\mathrm E}\left[
\sum_{i=1}^{\lfloor|B(z,\delta nr)|/2\rfloor}\, d\left(\pi\left(2i-1\right),\pi\left(2i\right)\right)
\right].
\label{intuitiveeq1}
\end{eqnarray}
%for a $1$-median $z'$ of $B(z,\delta nr)$.
\footnote{Though
%this inequality is
not
directly stated in later sections,
%it
this
%inequality
is a consequence of
%does not directly state this inequality. But
Lemmas~\ref{squareofthemeanlemma}~and~\ref{1medianinthesmallerballhasatmosttheaveragedistance}
in Sec.~\ref{expectedtimesection}.}
When $z$ in item~(\ref{thefirststep_runIndyk}) is indeed $(1+\epsilon/10^{10})$-approximate,
%and assuming $B(z,\delta nr)=[n]$,
\begin{eqnarray}
%\sum_{x\in B(z,\delta nr)}\,
\sum_{x\in [n]}\,
d\left(z,x\right)\le
\left(1+\frac{\epsilon}{10^{10}}\right)\cdot
%\sum_{x\in B(z,\delta nr)}\,
\sum_{x\in [n]}\,
d\left(z',x\right).
\label{intuitiveeq2}
\end{eqnarray}
%a variant of inequality~(\ref{intuitivegoal})
%Then we compare $z$ with $z'$.
%This will give
%to derive a variant of
%and derive concentration bounds
%\begin{eqnarray}
%\sum_{x\in B(z,\delta nr)}\, d\left(z,x\right)\le
%\left(1+\delta\right)\mathop{\mathrm E}\left[
%\left(2+o(1)\right)\cdot \sum_{i=1}^{|B(z,\delta nr)|/2}\, d\left(\pi\left(2i-1\right),\pi\left(2i\right)\right)
%\right],
%\nonumber
%\end{eqnarray}
%assuming $z$ is indeed $(1+\delta)$-approximate.
Assuming $B(z,\delta nr)=[n]$,
%and $\delta<\epsilon/2-\Omega(1)$,
inequalities~(\ref{intuitiveeq1})--(\ref{intuitiveeq2})
%will
%help
%derive
%prove
make
%Then, to derive
%a variant of
inequality~(\ref{intuitivegoal})
%to
hold with high probability
%,
%we will only need to show that
%provided that
as long as
$\sum_{i=1}^{\lfloor|B(z,\delta nr)|/2\rfloor}\, d(\pi(2i-1),\pi(2i))$
is highly
concentrated
%sharply
around its expectation.
%---It is
%is inaccurate because $B(z,\delta nr)=[n]$
The need
%It is
%also the need
for such concentration
%of $\sum_{i=1}^{|B(z,\delta nr)|/2}\, d(\pi(2i-1),\pi(2i))$
%that
is why
we
%do not
%take the codomain of
%require
%avoid
%taking
restrict
the radius of
the codomain
%$B(z,\delta nr)$
of
%points with large distances
%far-away points
$\pi$
to be $\delta nr$
%has the codomain
%to
%be the small-radius ball
%$[n]$---Large
in item~(\ref{lowerbounditem})---Large
%; otherwise,
%$B(z,\delta nr)$; if $\pi$ has codomain $[n]$, then
%large
distances
%may
ruin concentration bounds.
%for $\sum_{i=1}^{|B(z,\delta nr)|/2}\, d(\pi(2i-1),\pi(2i))$.
%To take the points in,
To
%takes account of
accommodate for
%does not consider
the
points in
$[n]\setminus B(z,\delta nr)$,
our
%Our
%true
witness for the approximation ratio of $z$
%differs somewhat from
%is actually a variant of
actually differs slightly from
%Because
inequality~(\ref{intuitivegoal}),
unlike in item~(\ref{thehardeststep}).\footnote{Our witness for the approximation
ratio of $z$ is as in line~6 of {\sf Las Vegas median}\ in Fig.~\ref{mainalgorithm}.}
%, unlike in
%that takes account of
%does not consider
%the
%points in
%$[n]\setminus B(z,\delta nr)$.
%we actually analyze a variant of inequality~(\ref{intuitivegoal})

\section{Definitions and preliminaries}

%For all $n\in\mathbb{Z}^+$, define $[n]\equiv \{1,2,\ldots,n\}$.
For a metric space $([n],d)$, $x\in[n]$ and $R>0$,
define
$$
B\left(x,R\right)\equiv\left\{y\in[n]\mid d\left(x,y\right)<R\right\}
$$
to be the open ball with center $x$ and radius $R$.
For brevity,
%let
$$
B^2\left(x,R\right)\equiv B\left(x,R\right)\times B\left(x,R\right).
$$
The
%Above, the
pairs in $B^2(x,R)$ are ordered.

An algorithm $A$ with oracle access to
%$([n],d)$
$d\colon [n]\times[n]\to[\,0,\infty\,)$
is denoted by $A^d$
and
%is given $n$ and
may query $d$ on any $(x,y)\in[n]\times[n]$ for $d(x,y)$.
%For convenience,
%write $A^d$ for $A$ with oracle access to $([n],d)$.
In this paper, all
%All
Landau symbols (such as $O(\cdot)$, $o(\cdot)$, $\Theta(\cdot)$ and $\Omega(\cdot)$)
are w.r.t.\ $n$.
The following
%fact
result
is due to Indyk.

\begin{fact}[\cite{Ind99, Ind00}]\label{Indykfact}
For all $\epsilon>0$,
{\sc metric $1$-median} has a Monte Carlo $(1+\epsilon)$-approximation
$O(n/\epsilon^2)$-time algorithm
with
%a $2/3$ probability of success.
a failure probability of at most
%$1/3$.
$1/e$.
\end{fact}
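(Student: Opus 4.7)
The plan, following Indyk, is to analyze a uniform-sampling algorithm. A natural first attempt is: draw a set $S \subseteq [n]$ of size $s = \Theta(1/\epsilon^2)$ uniformly at random, compute the exact cost $c(v) = \sum_{x\in[n]} d(v,x)$ for each $v \in S$ using $n$ oracle queries per candidate, and return $\argmin_{v\in S}\, c(v)$. This yields the claimed running time $O(ns) = O(n/\epsilon^2)$ immediately.

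For correctness, I would split the analysis in two. First, apply the triangle inequality $d(v,x) \le d(v,y^*) + d(y^*,x)$ with $y^*$ a true $1$-median, and average over $v$ drawn uniformly from $[n]$; this gives $\mathop{\mathrm E}_v[c(v)] \le 2\cdot\mathrm{OPT}$, so a uniformly random candidate is already a $2$-approximation in expectation. Second, apply Chebyshev's inequality to the random variable $c(v)$, with a variance bound derived from the metric, to conclude that an $\Omega(\epsilon^2)$ fraction of $[n]$ consists of $(1+\epsilon)$-approximate centers. Independence of the $s$ samples then gives a failure probability of at most $(1 - \Omega(\epsilon^2))^{\Theta(1/\epsilon^2)} \le 1/e$, matching the stated bound.

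The main obstacle is the concentration step. Pathological metrics (for example, a star configuration with one central vertex and many leaves) admit only a vanishing fraction of genuinely $(1+\epsilon)$-approximate centers, so the raw uniform sample cannot find a good center on its own. Indyk's resolution, which my plan would adopt, is to replace each exact cost computation by an estimator formed from $O(1/\epsilon^2)$ independent probe distances, and then argue via Chebyshev that the noisy argmin over candidates in $[n]$ is nevertheless $(1+\epsilon)$-approximate with the required probability. Balancing the number of candidates against the number of probes per candidate keeps the total query complexity at $O(n/\epsilon^2)$, while the metric-structure arguments (triangle inequality plus variance control) deliver the $1/e$ failure probability.
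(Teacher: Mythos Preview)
The paper does not prove this statement at all: it is recorded as a \emph{Fact} with citations to Indyk~\cite{Ind99,Ind00} and then used as a black box (the algorithm is simply named {\sf Indyk median} and invoked in line~3 of {\sf Las Vegas median}). There is therefore no proof in the paper to compare your attempt against.

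On the substance of your sketch: your final paragraph lands on the right algorithm---estimate $c(v)$ for every $v\in[n]$ using $O(1/\epsilon^2)$ random probe points and return the noisy argmin---and the $O(n/\epsilon^2)$ query bound is then immediate. But the phrase ``variance bound derived from the metric'' is doing all the work and you have not said what it is. The crucial observation in Indyk's analysis is to look at the \emph{differences} $d(v,s)-d(y^*,s)$ for a true median $y^*$ and uniform $s$: the triangle inequality bounds each such term in absolute value by $d(v,y^*)$, and $d(v,y^*)$ is itself controlled by $(c(v)+c(y^*))/n$; Chebyshev applied to these differences (not to the raw $c(v)$ estimates) is what yields the $(1+\epsilon)$ guarantee. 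Without this step your sketch is not yet a proof. Also, your first two paragraphs describe an approach you correctly diagnose as failing on star metrics, so they contribute nothing toward the argument; and the remark about ``balancing the number of candidates against the number of probes'' is misleading, since in Indyk's algorithm all $n$ points are candidates and the same probe set is reused for each.
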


Henceforth,
%we
denote Indyk's algorithm in Fact~\ref{Indykfact} by {\sf Indyk median}.
It is given
$n\in\mathbb{Z}^+$,
%$\epsilon\in(0,1)$
$\epsilon>0$
and oracle access to
a metric
%space $([n],d)$.
%$d$.
$d\colon [n]\times[n]\to[\,0,\infty\,)$.
%Without loss of generality, assume $\text{\sc Indyk approx.}^d(\epsilon)$
%to output a $(1+\epsilon)$-approximate solution with probability at least $2/3$.
%Below is an immediate consequence of Chebyshev's inequality.
%Below is an easy fact.
%A uniformly random point is easily seen to be
%$4$-approximate with probability at least $1/2$,
%as stated below.
By convention, denote the expected value and the variance of a random variable $X$
by $\mathop{\mathrm{E}}[\,X\,]$ and $\mathop{\mathrm{var}}(X)$, respectively.

\comment{ % don't really need this fact 20170128 20:09
\begin{fact}[Implicit in~{\cite[Theorem~22]{Cha12}}]\label{uniformlyrandompointisgood}
A uniformly random point of $[n]$ is a $4$-approximate $1$-median
with probability at least $1/2$.
%{\sc Metric $1$-median} has a Monte-Carlo $4$-approximation
%$O(1)$-time algorithm with a failure probability of at most $1/2$.
\end{fact}
}% don't really need this fact 20170128 20:09

\begin{chebyshev}
%\begin{fact}
[\cite{MR95}]
%\label{chebyshevlower}
Let
%$k>1$.
%Then for
%For
%any
%$k>1$ and
%random variable
$X$
be a random variable
with a finite expected value and a finite
nonzero variance.
Then for all
%and
%$k>1$,
$k\ge1$,
$$
\Pr\left[\,
\left|\,
X-\mathop{\mathrm E}[X]\,\right|\ge k\sqrt{\mathop{\mathrm{var}}(X)}
\,\right]\le \frac{1}{k^2}.
$$
%\end{fact}
\end{chebyshev}

\section{Algorithm and approximation ratio}\label{approximationratiosection}

\begin{figure}
\begin{algorithmic}[1]
\STATE Find $\delta>0$ such that $2+\epsilon=2/(1-100\sqrt{\delta})$;
\WHILE{{\sf true}}
  \STATE $z\leftarrow\text{\sf Indyk median}^d(n,\epsilon/10^{10})$;
%  \STATE $r\leftarrow \sum_{x\in M\setminus\{z\}}\,d(z,x)/(n-1)$;
  \STATE $r\leftarrow \sum_{x\in [n]}\,d(z,x)/n$;
  \STATE Pick a
uniformly random
%permutation
bijection
%function
%$\pi\colon [\,2\lfloor|B(z,\delta nr)|/2\rfloor\,]\to B(z,\delta nr)$
%uniformly at random among all one-to-one functions from
%$[\,2\lfloor|B(z,\delta nr)|/2\rfloor\,]$ to $B(z,\delta nr)$;
$\pi\colon [\,|B(z,\delta nr)|\,]\to B(z,\delta nr)$;
  \IF{$\sum_{i=1}^{\lfloor|B(z,\delta nr)|/2\rfloor}\,
d(\pi(2i-1),\pi(2i))+\sum_{x\in [n]\setminus B(z,\delta nr)}\,(d(z,x)-8r)
\ge(1-100\sqrt{\delta})nr/2$}
    \RETURN $z$;
  \ENDIF
\ENDWHILE
\end{algorithmic}
\caption{Algorithm {\sf Las Vegas median} with oracle access to a metric
%space $([n],d)$
$d\colon [n]\times[n]\to[\,0,\infty\,)$
and with inputs $n\in\mathbb{Z}^+$ and a small constant
%$\epsilon\in (0,1)$
$\epsilon>0$}
\label{mainalgorithm}
\end{figure}

Throughout this
paper,
%section,
take
any small constant
$\epsilon>0$,
%$\epsilon\in(\,0,10^{-100}\,]$
%$\epsilon\in(\,0,10^{-100}\,]$.
%$\epsilon\in(0,1)$,
e.g., $\epsilon=10^{-100}$.
%to be any small
%small constant, e.g., $\epsilon=10^{-100}$.
%constant.
By line~1 of {\sf Las Vegas median}
in Fig.~\ref{mainalgorithm}, $\delta>0$
is likewise a small constant.
The following lemma
%says
implies
that
$z$ in line~3 of {\sf Las Vegas median}
is a solution (to {\sc metric $1$-median}) no worse than
%the points
those
in
$[n]\setminus B(z,8r)$, where $r$ is as in line~4.
%all
%puts
%we only need to find
%$(2+\epsilon)$-approximate solutions
%to {\sc metric $1$-median}
%are in $B(z,8r)$, where $z$ is as in line~3 of {\sf Las Vegas median}.

\begin{lemma}\label{pointsoutofballarebad}
In each iteration of the {\bf while} loop of {\sf Las Vegas median},
%\
%in Fig.~\ref{mainalgorithm},
$$
\inf_{y\in [n]\setminus B(z,8r)}\,\sum_{x\in [n]}\,d(y,x)
\geq 7\cdot \sum_{x\in [n]}\,d(z,x).
$$
\end{lemma}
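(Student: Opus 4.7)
The plan is to unfold the definition of $r$ and apply the triangle inequality term by term. Since line~4 sets $r = \sum_{x \in [n]} d(z,x)/n$, we have the identity $\sum_{x \in [n]} d(z,x) = nr$, so the target inequality is equivalent to $\sum_{x \in [n]} d(y,x) \geq 7nr$ for every $y \in [n] \setminus B(z,8r)$.

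Fix such a $y$. By definition of $B(z,8r)$, we have $d(z,y) \geq 8r$. For each $x \in [n]$, the triangle inequality in the form $d(y,x) + d(z,x) \geq d(z,y)$ yields
\[
d(y,x) \geq d(z,y) - d(z,x) \geq 8r - d(z,x).
\]
Summing over $x \in [n]$ gives
\[
\sum_{x \in [n]} d(y,x) \;\geq\; 8nr - \sum_{x \in [n]} d(z,x) \;=\; 8nr - nr \;=\; 7nr,
\]
which is exactly $7 \cdot \sum_{x \in [n]} d(z,x)$. Taking the infimum over all admissible $y$ completes the proof.

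There is no real obstacle here; the statement is essentially a direct consequence of the reverse triangle inequality together with the averaging definition of $r$. The only subtle point worth noting is that the lower bound $d(z,y) \geq 8r$ (rather than the strict inequality built into the open ball $B(z,8r)$) is what makes the bound come out to $7nr$ cleanly; a strict inequality would not harm the argument since we can take the infimum in the final step.
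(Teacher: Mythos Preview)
Your proof is correct and essentially identical to the paper's: both fix $y\notin B(z,8r)$, apply the triangle inequality termwise to get $d(y,x)\ge d(z,y)-d(z,x)\ge 8r-d(z,x)$, sum over $x\in[n]$, and invoke line~4 to convert $nr$ to $\sum_{x\in[n]}d(z,x)$.
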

\begin{proof}
For each $y\in [n]\setminus B(z,8r)$,
\begin{eqnarray*}
\sum_{x\in [n]}\,d(y,x)&\ge& \sum_{x\in [n]}\,\left(d(y,z)-d(z,x)\right)\\
&\ge&\sum_{x\in [n]}\,\left(8r-d(z,x)\right)\\
&=&8nr-\sum_{x\in [n]}\,d(z,x)\\
&=&7\sum_{x\in [n]}\,d(z,x),
\end{eqnarray*}
where the first inequality follows from the triangle inequality,
the second
follows
from $y\notin B(z,8r)$
and the last equality
follows
from line~4 of {\sf Las Vegas median}.
\end{proof}

\begin{lemma}\label{pointsinballarenottoogood}
When line~7 of {\sf Las Vegas median}~is run,
%$z$ is a $((1-100\delta^{1/4})/2)$-approximate solution to
%{\sc metric $1$-median}.
$$
\min_{y\in B(z,8r)}\,\sum_{x\in [n]}\,d(y,x)
\ge \frac{1-100\sqrt{\delta}}{2}\cdot\sum_{x\in [n]}\,d(z,x).
$$
\end{lemma}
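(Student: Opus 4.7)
The plan is to bound $\sum_{x\in[n]}d(y,x)$ from below for an arbitrary $y\in B(z,8r)$ by splitting the sum according to whether $x$ lies inside or outside $B(z,\delta nr)$, and then to invoke the hypothesis that the if-test on line~6 has fired.

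First, I would handle the ``inside'' contribution. Since $\pi$ is a bijection onto $B(z,\delta nr)$, we have $\sum_{x\in B(z,\delta nr)}d(y,x)=\sum_{i=1}^{|B(z,\delta nr)|}d(y,\pi(i))$, which is at least
\[
\sum_{i=1}^{\lfloor|B(z,\delta nr)|/2\rfloor}\bigl(d(y,\pi(2i-1))+d(y,\pi(2i))\bigr),
\]
since we are just dropping a (possibly empty) last term. By the triangle inequality applied to each pair, this is bounded below by $\sum_{i=1}^{\lfloor|B(z,\delta nr)|/2\rfloor}d(\pi(2i-1),\pi(2i))$. This mirrors the chain of inequalities~(\ref{anyclosenesscentralityhassizeatleastthatofamatchingfirst})--(\ref{anyclosenesscentralityhassizeatleastthatofamatching}) from the sketch.

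Next I would handle the ``outside'' contribution. For each $x\in[n]\setminus B(z,\delta nr)$ and each $y\in B(z,8r)$, the triangle inequality gives $d(y,x)\ge d(z,x)-d(z,y)\ge d(z,x)-8r$, so
\[
\sum_{x\in[n]\setminus B(z,\delta nr)}d(y,x)\ge\sum_{x\in[n]\setminus B(z,\delta nr)}\bigl(d(z,x)-8r\bigr).
\]

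Adding the two lower bounds yields
\[
\sum_{x\in[n]}d(y,x)\ge \sum_{i=1}^{\lfloor|B(z,\delta nr)|/2\rfloor}d(\pi(2i-1),\pi(2i))+\sum_{x\in[n]\setminus B(z,\delta nr)}\bigl(d(z,x)-8r\bigr).
\]
Because line~7 executes only when the test on line~6 succeeds, the right-hand side is at least $(1-100\sqrt{\delta})nr/2$. Finally, $nr=\sum_{x\in[n]}d(z,x)$ by line~4, so the bound rewrites as $\frac{1-100\sqrt{\delta}}{2}\sum_{x\in[n]}d(z,x)$. Taking the minimum over $y\in B(z,8r)$ finishes the proof. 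There is no real obstacle here; the statement is essentially a direct translation of the line~6 condition into a lower bound on the cost of any candidate in $B(z,8r)$, using only the triangle inequality and the fact that $\pi$ is a bijection.
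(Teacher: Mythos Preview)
Your proposal is correct and follows essentially the same route as the paper's proof: split $\sum_{x\in[n]}d(y,x)$ into the $B(z,\delta nr)$ and $[n]\setminus B(z,\delta nr)$ parts, bound the first via the bijection $\pi$ and the triangle inequality, bound the second via the triangle inequality and $d(z,y)<8r$, then invoke the line~6 condition and the identity $nr=\sum_{x\in[n]}d(z,x)$ from line~4.
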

\begin{proof}
Pick any $y\in B(z,8r)$.
\comment{ % we actually have this later 20170130 14:33
Clearly,
$$
\sum_{x\in [n]}\,d(y,x)
=\sum_{x\in B(z,\delta nr)}\,d(y,x)
+\sum_{x\in [n]\setminus B(z,\delta nr)}\,d(y,x).
$$
}% we actually have this later 20170130 14:33
%Furthermore,
We have
%Because $\pi$ is a permutation of $B(z,\delta nr)$,
\begin{eqnarray}
\sum_{x\in B(z,\delta nr)}\,d(y,x)
&\ge&\sum_{i=1}^{\lfloor|B(z,\delta nr)|/2\rfloor}\,
\left(d\left(y,\pi\left(2i-1\right)\right)
+d\left(y,\pi\left(2i\right)\right)\right)\label{distancetoballpoints}\\
&\ge&\sum_{i=1}^{\lfloor|B(z,\delta nr)|/2\rfloor}\,
d\left(\pi\left(2i-1\right),\pi\left(2i\right)\right),\nonumber
\end{eqnarray}
where the first
%equality
and
the
second
%inequality
inequalities
follow from
the injectivity of $\pi$ in
line~5
of {\sf Las Vegas median}
and the triangle inequality, respectively.\footnote{Note that
$\pi(1)$, $\pi(2)$, $\ldots$, $\pi(2\,\lfloor|B(z,\delta nr)|/2\rfloor)$
are distinct elements of $B(z,\delta nr)$.}
%As $y\in B(z,8r)$,
%We have
Furthermore,
\begin{eqnarray}
\sum_{x\in [n]\setminus B(z,\delta nr)}\,d(y,x)
&\ge&\sum_{x\in [n]\setminus B(z,\delta nr)}\,\left(d(z,x)-d(y,z)\right)
\nonumber\\
&\ge&\sum_{x\in [n]\setminus B(z,\delta nr)}\,\left(d(z,x)-8r\right),
\label{distancetooutofballpoints}
\end{eqnarray}
where the first and the second inequalities follow from the triangle
inequality and $y\in B(z,8r)$, respectively.
%To sum up,
Summing up
%By
inequalities~(\ref{distancetoballpoints})--(\ref{distancetooutofballpoints}),
$$
\sum_{x\in [n]}\,d(y,x)
\ge
\sum_{i=1}^{\lfloor|B(z,\delta nr)|/2\rfloor}\,
d\left(\pi\left(2i-1\right),\pi\left(2i\right)\right)
+\sum_{x\in [n]\setminus B(z,\delta nr)}\,\left(d(z,x)-8r\right).
$$
This and
%lines~4~and~6--7
lines~6--7
of {\sf Las Vegas median}
imply
$$
\sum_{x\in [n]}\,d(y,x)
\ge\frac{1-100\sqrt{\delta}}{2}\cdot nr
$$
when line~7 is run.
%complete the proof.
Finally, $nr=\sum_{x\in[n]}\,d(z,x)$ by line~4.
\end{proof}

Lemmas~\ref{pointsoutofballarebad}--\ref{pointsinballarenottoogood}
and line~1 of {\sf Las Vegas median}
yield the following.

\begin{lemma}\label{wehaveagoodsolutionattheend}
When line~7 of {\sf Las Vegas median} is run,
%$z$ is a $((1-100\delta^{1/4})/2)$-approximate solution to
%{\sc metric $1$-median}.
$$
\left(2+\epsilon\right)
\cdot
\min_{y\in [n]}\,\sum_{x\in [n]}\,d(y,x)
\ge
%\frac{1-100\sqrt{\delta}}{2}
%\cdot
\sum_{x\in [n]}\,d(z,x),
$$
i.e., $z$ is a
%$(2/(1-100\delta^{1/4}))$-approximate
$(2+\epsilon)$-approximate
%solution to {\sc metric $1$-median}.
$1$-median.
\end{lemma}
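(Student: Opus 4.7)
The plan is to split the minimization $\min_{y\in [n]}\,\sum_{x\in [n]}\,d(y,x)$ according to whether the minimizer lies inside or outside $B(z,8r)$, then apply one of the two preceding lemmas in each case, and finally use the choice of $\delta$ in line~1 to reconcile the two bounds with the desired factor $2+\epsilon$.

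First I would observe that line~1 of \textsf{Las Vegas median} is equivalent to $(1-100\sqrt{\delta})/2 = 1/(2+\epsilon)$, so the bound from Lemma~\ref{pointsinballarenottoogood} reads
$$
\min_{y\in B(z,8r)}\,\sum_{x\in [n]}\,d(y,x)\ \ge\ \frac{1}{2+\epsilon}\cdot\sum_{x\in [n]}\,d(z,x).
$$
Next, because $\epsilon>0$ is a small constant we have $7\ge 1/(2+\epsilon)$ (in fact $7\gg 1/2$), so Lemma~\ref{pointsoutofballarebad} likewise yields
$$
\inf_{y\in [n]\setminus B(z,8r)}\,\sum_{x\in [n]}\,d(y,x)\ \ge\ \frac{1}{2+\epsilon}\cdot\sum_{x\in [n]}\,d(z,x).
$$
Taking the minimum of these two estimates gives $\min_{y\in[n]}\,\sum_{x\in[n]}\,d(y,x)\ge \sum_{x\in[n]}\,d(z,x)/(2+\epsilon)$, which is the claim after multiplying through by $2+\epsilon$.

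There is no real obstacle here: the lemma is essentially a bookkeeping combination of Lemmas~\ref{pointsoutofballarebad} and~\ref{pointsinballarenottoogood} with the choice of $\delta$ from line~1. The only thing to keep an eye on is that the branch $[n]\setminus B(z,8r)$ may be empty, but then the infimum is $+\infty$ by convention and the argument trivially goes through via the $B(z,8r)$ branch alone.
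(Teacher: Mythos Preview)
Your proof is correct and follows essentially the same approach as the paper: the paper's proof simply states that the lemma follows from Lemmas~\ref{pointsoutofballarebad}--\ref{pointsinballarenottoogood} together with line~1 of {\sf Las Vegas median}, and you have spelled out exactly that combination.
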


By Lemma~\ref{wehaveagoodsolutionattheend},
{\sf Las Vegas median}
outputs a $(2+\epsilon)$-approximate
%solution
$1$-median
at termination.

\section{Probability of termination in any iteration}\label{expectedtimesection}

%In the sequel, we
%fix
%analyze
This section analyzes
the probability of running line~7
in
any
particular
iteration of the {\bf while} loop
of {\sf Las Vegas median}.
%and
%analyze
%the
%its
%probability of running line~7.
%in each iteration of {\sc Las Vegas approx.}
The following lemma
uses an easy averaging argument.
%says that
%$|B(z,\delta nr)|=n(1-o(1))$.
%most points in $[n]$ are close to $z$.

\begin{lemma}\label{thesmallradiusballislarge}
%In each iteration of the {\bf while} loop of {\sc Las Vegas approx.},
$$\left|\,[n]\setminus B\left(z,\delta nr\right)\,\right|\le \frac{1}{\delta}$$
and, therefore,
$$\left|B\left(z,\delta nr\right)\right|
\ge n-\frac{1}{\delta}
=\left(1-o(1)
%\cdot\frac{1}{\delta}
\right)n.$$
\end{lemma}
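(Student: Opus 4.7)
The plan is a direct averaging (Markov-style) argument on the distances from $z$. By line~4 of {\sf Las Vegas median}, we have the exact identity $\sum_{x\in[n]} d(z,x) = nr$. Meanwhile, by the definition of the open ball $B(z,\delta nr)$, every point $x \in [n]\setminus B(z,\delta nr)$ satisfies $d(z,x) \ge \delta nr$.

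First I would restrict the sum $\sum_{x\in[n]} d(z,x)$ to the points lying outside the ball, obtaining
\[
nr \;=\; \sum_{x\in[n]} d(z,x) \;\ge\; \sum_{x\in[n]\setminus B(z,\delta nr)} d(z,x) \;\ge\; \bigl|\,[n]\setminus B(z,\delta nr)\,\bigr|\cdot \delta nr.
\]
Dividing through by $\delta nr$ (which is positive in any nondegenerate iteration) yields $|\,[n]\setminus B(z,\delta nr)\,| \le 1/\delta$, proving the first inequality.

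For the second inequality, I would simply write $|B(z,\delta nr)| = n - |\,[n]\setminus B(z,\delta nr)\,| \ge n - 1/\delta$, and then observe that since $\delta$ is a constant fixed in line~1 of {\sf Las Vegas median}, we have $1/\delta = O(1) = o(n)$, hence $n - 1/\delta = (1-o(1))n$.

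The only potential obstacle is the edge case where $r = 0$, in which case $B(z,\delta nr) = \emptyset$; but then $d(z,x)=0$ for all $x\in[n]$ so $z$ is a trivial $1$-median and the bound $|[n]\setminus B(z,\delta nr)| = n$ violates $1/\delta$ only in a degenerate sense. This case is harmless: either one assumes it away or notes that $r=0$ immediately makes $z$ exactly optimal so termination is trivial. No other subtlety arises.
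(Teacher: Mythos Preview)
Your proposal is correct and follows essentially the same averaging argument as the paper: restrict $\sum_{x\in[n]} d(z,x)$ to points outside $B(z,\delta nr)$, lower-bound each such distance by $\delta nr$, and invoke line~4. The paper's proof is terser (omitting the complement step and the $r=0$ edge case), but the substance is identical.
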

\begin{proof}
%Assume otherwise.
%Then
Clearly,
$$
\sum_{x\in [n]}\,d\left(z,x\right)
\ge \sum_{x\in [n]\setminus B(z,\delta nr)}\,d\left(z,x\right)
\ge \sum_{x\in [n]\setminus B(z,\delta nr)}\,\delta nr
=\left|\,[n]\setminus B\left(z,\delta nr\right)\,\right|\cdot \delta nr.
%> \frac{1}{\delta}\cdot\delta nr.
$$
%contradicting
%This
Then use
%Use
line~4 of {\sf Las Vegas median}.
%\ and the averaging argument (or Markov's inequality).
\end{proof}

Henceforth,
assume $n\ge 1/\delta+4$ without loss of generality; otherwise, find
a $1$-median by brute force.
So $|B(z,\delta nr)|\ge 4$ by Lemma~\ref{thesmallradiusballislarge}.
%Let
Define
\begin{eqnarray}
r'
%\stackrel{\text{def.}}{=}
\equiv
%\frac{1}{|B(z,\delta nr)|\cdot(|B(z,\delta nr)|-1)}
\frac{1}{|B(z,\delta nr)|^2}
\cdot
%\sum_{\text{\rm distinct $u$, $v\in B(z,\delta nr)$}}\,
\sum_{u, v\in B(z,\delta nr)}\,
d\left(u,v\right)
\label{smallerballaveragedefinition}
\end{eqnarray}
to
be the average distance
%between distinct points
in $B(z,\delta nr)$.
\comment{ % we will need the ball size to be >3 sooner or later
As $z\in B(z,\delta nr)$, the
%The
denominator in the right-hand side of
equation~(\ref{smallerballaveragedefinition})
is nonzero.
}% we will need the ball size to be >3 sooner or later

%\end{lemma}
%This and the linearity of expectation
%imply
\begin{lemma}\label{inneraverageandoverallaverage}
%$$r'\leq 2\left(1+o(1)\right)r.$$
$r'\leq 2r$.
\end{lemma}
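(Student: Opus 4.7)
The plan is to bound $r'$ in two moves: first by the triangle inequality through the center $z$, and then by an averaging observation about how distances from $z$ split across the ball $B(z,\delta nr)$.

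First, I will apply the triangle inequality in the form $d(u,v)\le d(u,z)+d(z,v)$ to every ordered pair $(u,v)\in B(z,\delta nr)^2$. Summing and dividing by $|B(z,\delta nr)|^2$ collapses the double sum into a single sum, giving
\[
r' \;\le\; \frac{2}{|B(z,\delta nr)|}\sum_{u\in B(z,\delta nr)}\,d(z,u).
\]
So it suffices to show that the average distance from $z$ to points in $B(z,\delta nr)$ is at most the overall average $r$.

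The second step exploits the very definition of the ball: every $u\in B(z,\delta nr)$ satisfies $d(z,u)<\delta nr$, whereas every $u\in [n]\setminus B(z,\delta nr)$ satisfies $d(z,u)\ge \delta nr$. Hence the values being averaged inside the ball are no larger than the cutoff $\delta nr$, while those being averaged outside are no smaller. A one-line rearrangement (if the "inside" average exceeded $\delta nr$, we'd contradict the cutoff; a more careful bookkeeping shows the inside average is $\le$ the overall average) yields
\[
\frac{1}{|B(z,\delta nr)|}\sum_{u\in B(z,\delta nr)}d(z,u)\;\le\;\frac{1}{n}\sum_{u\in[n]}d(z,u)\;=\;r,
\]
using line~4 of {\sf Las Vegas median} for the final equality.

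Combining the two bounds gives $r'\le 2r$. I do not foresee a real obstacle: the triangle-inequality step is routine, and the averaging step is the elementary fact that truncating the tail of a nonnegative sequence at a threshold that upper-bounds the kept values cannot raise the mean. The only thing worth being careful about is handling the degenerate case $r=0$ (where $B(z,\delta nr)=\{z\}$ or the ball might be empty under the convention chosen), in which case $r'=0$ and the inequality is trivial; under the running assumption $n\ge 1/\delta+4$ and Lemma~\ref{thesmallradiusballislarge}, the ball has at least four points and $z$ itself lies in $B(z,\delta nr)$ whenever $r>0$, so the averaging argument is well-defined.
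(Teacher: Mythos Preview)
Your proof is correct and follows essentially the same approach as the paper: first the triangle inequality through $z$ to get $r'\le \frac{2}{|B(z,\delta nr)|}\sum_{u\in B(z,\delta nr)}d(z,u)$, then the observation that the average distance from $z$ over the ball is at most the overall average $r$. The paper simply asserts the second step as ``obvious,'' whereas you supply the threshold-based justification; both reach the conclusion identically.
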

\begin{proof}
By
equation~(\ref{smallerballaveragedefinition}) and
the triangle inequality,
%We have
\begin{eqnarray}
r'
%&\stackrel{\text{(\ref{smallerballaveragedefinition})}}{\le}&
%\frac{1}{|B(z,\delta nr)|\cdot(|B(z,\delta nr)|-1)}
%\cdot
%\sum_{u, v\in B(z,\delta nr)}\,
%d\left(u,v\right)
%\left(d\left(z,u\right)+d\left(z,v\right)\right)\\
&\le&
%\frac{1}{|B(z,\delta nr)|\cdot(|B(z,\delta nr)|-1)}
\frac{1}{|B(z,\delta nr)|^2}
\cdot
\sum_{u, v\in B(z,\delta nr)}\,
\left(d\left(z,u\right)+d\left(z,v\right)\right)
\label{frominnerdistancetowholedistance}\\
&=&
%\frac{1}{|B(z,\delta nr)|\cdot(|B(z,\delta nr)|-1)}
\frac{1}{|B(z,\delta nr)|^2}
\cdot
\left|B(z,\delta nr)\right|\cdot\left(
\sum_{u\in B(z,\delta nr)}\,
d\left(z,u\right)
+\sum_{v\in B(z,\delta nr)}\, d\left(z,v\right)
\right)\nonumber\\
&=&
%\frac{2}{|B(z,\delta nr)|-1}
\frac{2}{|B(z,\delta nr)|}
\cdot
\sum_{u\in B(z,\delta nr)}\,
d\left(z,u\right).\nonumber
\end{eqnarray}
%where the second inequality follows from the triangle inequality.
Obviously,
the average distance from $z$ to the points in $B(z,\delta nr)$
%is less than or equal to
is at most
that from $z$ to all points,
i.e.,
%That is,
\begin{eqnarray}
\frac{1}{|B(z,\delta nr)|}
\cdot
\sum_{u\in B(z,\delta nr)}\, d\left(z,u\right)
\le
\frac{1}{n}\cdot
\sum_{u\in [n]}\, d\left(z,u\right).
\label{frominnerdistancetowholedistance2}
\end{eqnarray}
Inequalities~(\ref{frominnerdistancetowholedistance})--(\ref{frominnerdistancetowholedistance2})
and
line~4 of {\sf Las Vegas median}
%and Lemma~\ref{thesmallradiusballislarge}
complete the proof.
\end{proof}

To analyze the probability
%of running line~6
%,
%of
that
the condition in line~6
of {\sf Las Vegas median}
holds,
we
%begin by bounding
shall
%proceed to
derive a concentration bound
%on
%bound
%the variance of
for
$$
\sum_{i=1}^{\lfloor|B(z,\delta nr)|/2\rfloor}\,
d\left(\pi\left(2i-1\right),\pi\left(2i\right)\right),
$$
whose
expected value and
variance
%is
are
%analyzed
examined
in the
%following
next
four lemmas.
%are for this purpose.

\begin{lemma}\label{squareofthemeanlemma}
With expectations taken over $\pi$,
%(but not $z$),
\begin{eqnarray}
%\left(
%&&
\mathop{\mathrm E}\left[\,
\sum_{i=1}^{\lfloor|B(z,\delta nr)|/2\rfloor}\,
d\left(\pi\left(2i-1\right),\pi\left(2i\right)\right)
\,\right]
%\\
%\right)^2
%&=&
=
%\frac{|B(z,\delta nr)|}{2}
%\cdot
%r'.
%\frac{1}{4}
\frac{1}{2}
\cdot
\left(
1
\pm o(1)
%\cdot\frac{1}{\delta^2}
\right)
%n^2\left(r'\right)^2.
nr'.
\comment{ % to be abbreviated 20161223 13:53
\frac{1}{|B(z,\delta nr)|\cdot(|B(z,\delta nr)|-1)}
\cdot
\sum_{\text{\rm distinct $u$, $v\in B(z,\delta nr)$}}\,d\left(u,v\right).
}% to be abbreviated 20161223 13:53
\end{eqnarray}
\end{lemma}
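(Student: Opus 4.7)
The plan is to write $m \equiv |B(z,\delta nr)|$ for brevity, compute the expectation by linearity, and then convert the resulting sum over ordered distinct pairs into the all-pairs sum that defines $r'$.

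First, I would observe that since $\pi$ is a uniformly random bijection from $[m]$ onto $B(z,\delta nr)$, for each fixed $i\in\{1,\ldots,\lfloor m/2\rfloor\}$ the ordered pair $(\pi(2i-1),\pi(2i))$ is uniformly distributed over the $m(m-1)$ ordered pairs of distinct elements of $B(z,\delta nr)$. Hence
$$
\mathop{\mathrm E}\left[\,d\left(\pi(2i-1),\pi(2i)\right)\,\right]
= \frac{1}{m(m-1)}\sum_{\substack{u,v\in B(z,\delta nr)\\ u\neq v}} d(u,v),
$$
and linearity of expectation makes the full expectation equal to $\lfloor m/2\rfloor$ times this quantity.

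Next, I would use the small but crucial fact that $d(u,u)=0$ for every $u$, so the constraint $u\neq v$ may be dropped without changing the sum. By the definition~(\ref{smallerballaveragedefinition}) of $r'$, the unrestricted sum equals $m^2 r'$, and therefore the expectation equals $\lfloor m/2\rfloor\cdot\tfrac{m\,r'}{m-1}$.

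Finally, I would do the asymptotic cleanup. Lemma~\ref{thesmallradiusballislarge} gives $m\geq n-1/\delta = (1-o(1))n$, and trivially $m\leq n$, so $m/(m-1) = 1+o(1)$ and $\lfloor m/2\rfloor = \tfrac{1}{2}(1\pm o(1))\,n$. Combining these yields the claimed value $\tfrac{1}{2}(1\pm o(1))\,nr'$. I do not expect a real obstacle: once one spots the trick of using $d(u,u)=0$ to pass from distinct-pair sums to all-pair sums, the rest is routine bookkeeping with the bound on $|B(z,\delta nr)|$.
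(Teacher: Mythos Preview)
Your proposal is correct and follows essentially the same route as the paper: compute $\mathop{\mathrm E}[d(\pi(2i-1),\pi(2i))]$ as a uniform average over distinct pairs, drop the distinctness constraint via $d(u,u)=0$ (the paper calls this the identity of indiscernibles), identify the resulting sum as $m^2 r'$ via equation~(\ref{smallerballaveragedefinition}), and finish with Lemma~\ref{thesmallradiusballislarge} and linearity of expectation. The only cosmetic difference is that the paper phrases the uniformity in terms of size-$2$ subsets rather than ordered pairs, which makes no difference to the computation.
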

\begin{proof}
For each $i\in[\,\lfloor|B(z,\delta nr)|/2\rfloor\,]$,
$\{\pi(2i-1),\pi(2i)\}$ is a uniformly random
size-$2$ subset of $B(z,\delta nr)$
%for each $i\in[\,2\lfloor|B(z,\delta nr)|/2\rfloor\,]$;
by line~5 of {\sf Las Vegas median}.
%;
%hence
%yielding the following lemma.
%\begin{lemma}
Therefore,
%So for all $i\in[\,|B(z,\delta nr)|/2\,]$,
\begin{eqnarray}
%&&
\mathop{\mathrm E}\left[\,
%\sum_{i=1}^{|B(z,\delta nr)|/2}\,
d\left(\pi\left(2i-1\right),\pi\left(2i\right)\right)
\,\right]
%\label{startoftheexpectedlengthofrandommatching}\\
%\nonumber\\
&=&
%=
\frac{1}{|B(z,\delta nr)|\cdot(|B(z,\delta nr)|-1)}
\cdot
\sum_{\text{\rm distinct $u$, $v\in B(z,\delta nr)$}}\,d\left(u,v\right)
\,\,\,\,\,\,\,\,\,
\label{startofequivalentstatementontheaverageinnerdistances}\\
&=&
\frac{1}{|B(z,\delta nr)|\cdot(|B(z,\delta nr)|-1)}
\cdot
\sum_{u, v\in B(z,\delta nr)}\,d\left(u,v\right)\nonumber\\
&=&
\left(1+o(1)
%\cdot\frac{1}{\delta}
\right)r',
%\nonumber
\label{equivalentstatementontheaverageinnerdistances}
\end{eqnarray}
where the
second (resp., last) equality follows from
%$d(u,u)=0$
the identity of indiscernibles
%and the last equality follows from
(resp.,
equation~(\ref{smallerballaveragedefinition}) and
Lemma~\ref{thesmallradiusballislarge}).
Finally, use
equations~(\ref{startofequivalentstatementontheaverageinnerdistances})--(\ref{equivalentstatementontheaverageinnerdistances}),
the linearity of expectation
and Lemma~\ref{thesmallradiusballislarge}.
\end{proof}

%By the linearity of expectation,
Clearly,
\begin{eqnarray}
&&\mathop{\mathrm E}\left[\,
\left(\sum_{i=1}^{\lfloor|B(z,\delta nr)|/2\rfloor}\,
d\left(\pi\left(2i-1\right),\pi\left(2i\right)\right)\right)^2
\,\right]
\label{startofthemeanofsquare}\\
%\nonumber\\
&=&
\mathop{\mathrm E}\left[\,
\sum_{i=1}^{\lfloor|B(z,\delta nr)|/2\rfloor}\,
d\left(\pi\left(2i-1\right),\pi\left(2i\right)\right)
\cdot \sum_{j=1}^{\lfloor|B(z,\delta nr)|/2\rfloor}\,
d\left(\pi\left(2j-1\right),\pi\left(2j\right)\right)
\,\right]\nonumber\\
&=&
%\sum_{\text{distinct $i,j\in[\,|B(z,\delta nr)|/2\,]$}}\,
\sum_{\text{distinct $i,j=1$}}^{\lfloor|B(z,\delta nr)|/2\rfloor}\,
\mathop{\mathrm E}\left[\,
d\left(\pi\left(2i-1\right),\pi\left(2i\right)\right)
\cdot d\left(\pi\left(2j-1\right),\pi\left(2j\right)\right)
\,\right]\nonumber\\
&+&
\sum_{i=1}^{\lfloor|B(z,\delta nr)|/2\rfloor}\,
\mathop{\mathrm E}\left[\,
d^2\left(\pi\left(2i-1\right),\pi\left(2i\right)\right)
\,\right],\label{endofthemeanofsquare}
\end{eqnarray}
where the last equality follows from the linearity of expectation
and the separation of pairs $(i,j)$
according to whether $i=j$.
%into those satisfying $i\neq j$ and the others.

\begin{lemma}\label{distinctdistancesproduct}
%For all distinct $i$, $j\in[\,|B(z,\delta nr)|/2\,]$,
With expectations taken over $\pi$,
%(but not $z$),
\begin{eqnarray*}
\sum_{\text{\rm distinct $i,j=1$}}^{\lfloor|B(z,\delta nr)|/2\rfloor}\,
\mathop{\mathrm E}\left[\,
d\left(\pi\left(2i-1\right),\pi\left(2i\right)\right)
\cdot d\left(\pi\left(2j-1\right),\pi\left(2j\right)\right)
\,\right]
\le\frac{1}{4}\cdot \left(1
%\pm
+
o(1)
%\cdot\frac{1}{\delta^3}
\right)n^2\left(r'\right)^2.
\end{eqnarray*}
\end{lemma}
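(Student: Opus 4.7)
The plan is to exploit the symmetry of the uniformly random bijection $\pi$ in line~5 of {\sf Las Vegas median}. Writing $N\equiv|B(z,\delta nr)|$ for brevity, observe that for any distinct $i,j\in\{1,\ldots,\lfloor N/2\rfloor\}$ the four indices $2i-1,2i,2j-1,2j$ are distinct, so the ordered $4$-tuple $(\pi(2i-1),\pi(2i),\pi(2j-1),\pi(2j))$ is distributed uniformly over the ordered $4$-tuples of distinct elements of $B(z,\delta nr)$. Consequently,
\[
\mathop{\mathrm E}\bigl[\,d(\pi(2i-1),\pi(2i))\cdot d(\pi(2j-1),\pi(2j))\,\bigr]
=\frac{1}{N(N-1)(N-2)(N-3)}\sum d(u,v)\,d(w,x),
\]
where the right-hand sum ranges over ordered $4$-tuples $(u,v,w,x)$ of pairwise distinct elements of $B(z,\delta nr)$, and this common value is independent of the particular distinct pair $(i,j)$.

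Next I would discard the distinctness constraint by invoking nonnegativity of distances:
\[
\sum_{\text{distinct }u,v,w,x} d(u,v)\,d(w,x)
\le \sum_{u,v,w,x\in B(z,\delta nr)} d(u,v)\,d(w,x)
=\Bigl(\sum_{u,v\in B(z,\delta nr)} d(u,v)\Bigr)^{\!2}= N^4(r')^2,
\]
where the last equality is just the definition of $r'$ in equation~(\ref{smallerballaveragedefinition}). The number of ordered pairs $(i,j)$ with $i\ne j$ drawn from $\{1,\ldots,\lfloor N/2\rfloor\}$ is $\lfloor N/2\rfloor(\lfloor N/2\rfloor-1)\le N^2/4$. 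Multiplying gives a total upper bound of
\[
\frac{N^2/4\cdot N^4}{N(N-1)(N-2)(N-3)}\,(r')^2
=\frac{N^5}{4(N-1)(N-2)(N-3)}\,(r')^2
=\frac{1}{4}\bigl(1+o(1)\bigr)N^2(r')^2,
\]
and since $N\le n$ (trivially) with $N=(1-o(1))n$ by Lemma~\ref{thesmallradiusballislarge}, this is at most $\frac{1}{4}(1+o(1))\,n^2(r')^2$, as required.

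There is essentially no obstacle here: the symmetry of the uniform bijection $\pi$ collapses the expectation to a closed-form sum, and nonnegativity of distances lets us inflate that sum to the clean square $(\sum_{u,v}d(u,v))^2$. The only mild bookkeeping is to verify that the several $o(1)$ error terms (from $\lfloor N/2\rfloor$ versus $N/2$, from $N(N-1)(N-2)(N-3)$ versus $N^4$, and from Lemma~\ref{thesmallradiusballislarge}) all end up inside a single $(1+o(1))$ multiplicative factor on the upper-bound side, which they do since each step above yields an \emph{upper} estimate.
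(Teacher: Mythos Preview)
Your proof is correct and follows essentially the same approach as the paper's: compute the common expectation for each distinct pair $(i,j)$ via the uniformity of $\pi$, drop the distinctness constraint on the four points using nonnegativity to obtain $(\sum_{u,v}d(u,v))^2$, multiply by the number $\lfloor N/2\rfloor(\lfloor N/2\rfloor-1)$ of ordered pairs, and finish with equation~(\ref{smallerballaveragedefinition}) and Lemma~\ref{thesmallradiusballislarge}. The only cosmetic difference is that you substitute $N^4(r')^2$ for $(\sum_{u,v}d(u,v))^2$ one step earlier and carry out the asymptotic simplification explicitly, whereas the paper leaves the final simplification to the reader.
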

\begin{proof}
Pick any distinct $i$, $j\in[\,\lfloor|B(z,\delta nr)|/2\rfloor\,]$.
By line~5 of {\sf Las Vegas median},
$$\left\{\pi\left(2i-1\right), \pi\left(2i\right), \pi\left(2j-1\right),
\pi\left(2j\right)\right\}$$
is a uniformly random size-$4$ subset of $B(z,\delta nr)$.
So
\begin{eqnarray*}
&&\mathop{\mathrm E}\left[\,
d\left(\pi(2i-1),\pi(2i)\right)
\cdot d\left(\pi(2j-1),\pi(2j)\right)
\,\right]\\
&=&
\frac{1}{|B(z,\delta nr)|
\cdot(|B(z,\delta nr)|-1)\cdot(|B(z,\delta nr)|-2)
\cdot(|B(z,\delta nr)|-3)}\\
&\cdot&
\sum_{\text{distinct $u$, $v$, $x$, $y\in B(z,\delta nr)$}}\,
d\left(u,v\right)\cdot d\left(x,y\right).
\end{eqnarray*}
%Finally,
%Furthermore,

Clearly,
\begin{eqnarray*}
%&&
\sum_{\text{distinct $u$, $v$, $x$, $y\in B(z,\delta nr)$}}\,
d\left(u,v\right)\cdot d\left(x,y\right)
%\\
&\le&
\sum_{u, v, x, y\in B(z,\delta nr)}\,
d\left(u,v\right)\cdot d\left(x,y\right)\\
%&=&
%\sum_{\text{distinct $u$, $v\in B(z,\delta nr)$}}\,
%d\left(u,v\right)
%\sum_{\text{distinct $x$, $y\in B(z,\delta nr)\setminus\{u,v\}$}}\,
%d\left(x,y\right)\\
%&\le&
&=&
%\sum_{\text{distinct $u$, $v\in B(z,\delta nr)$}}\,
\sum_{u, v\in B(z,\delta nr)}\,
d\left(u,v\right)
\cdot
%\sum_{\text{distinct $x$, $y\in B(z,\delta nr)$}}\,
\sum_{x, y\in B(z,\delta nr)}\,
d\left(x,y\right)\\
&=&
%\left(\sum_{\text{distinct $u$, $v\in B(z,\delta nr)$}}\,
\left(\sum_{u, v\in B(z,\delta nr)}\,
d\left(u,v\right)
\right)^2.
\end{eqnarray*}
%Finally, invoke
%equation~(\ref{smallerballaveragedefinition}).
In summary,
\begin{eqnarray*}
&&
\sum_{\text{\rm distinct $i,j=1$}}^{\lfloor|B(z,\delta nr)|/2\rfloor}\,
\mathop{\mathrm E}\left[\,
d\left(\pi\left(2i-1\right),\pi\left(2i\right)\right)
\cdot d\left(\pi\left(2j-1\right),\pi\left(2j\right)\right)
\,\right]\\
&\le&
\left\lfloor\frac{|B(z,\delta nr)|}{2}\right\rfloor
\left(\left\lfloor\frac{|B(z,\delta nr)|}{2}\right\rfloor-1\right)\\
&\cdot&
\frac{1}{|B(z,\delta nr)|
\cdot(|B(z,\delta nr)|-1)\cdot(|B(z,\delta nr)|-2)
\cdot(|B(z,\delta nr)|-3)}\\
&\cdot&
\left(
%\sum_{\text{\rm distinct $u$, $v\in B(z,\delta nr)$}}\,
\sum_{u, v\in B(z,\delta nr)}\,
d\left(u,v\right)
\right)^2.
%\\
%&=&
%\frac{|B(z,\delta nr)|^2}{4}\cdot \left(r'\right)^2.
\end{eqnarray*}
%Now
%So
%This
Together with
Lemma~\ref{thesmallradiusballislarge}
and equation~(\ref{smallerballaveragedefinition}),
this completes
%complete
the proof.
\end{proof}

\comment{ % merged into the above lemma 20161223 21:45
Lemmas~\ref{thesmallradiusballislarge}~and~\ref{distinctdistancesproduct}
and equation~(\ref{smallerballaveragedefinition})
yield the following.

\begin{lemma}
%For all distinct $i$, $j\in[\,|B(z,\delta nr)|/2\,]$,
\begin{eqnarray*}
\sum_{\text{\rm distinct $i,j=1$}}^{|B(z,\delta nr)|/2}\,
\mathop{\mathrm E}\left[\,
d\left(\pi(2i-1),\pi(2i)\right)
\cdot d\left(\pi(2j-1),\pi(2j)\right)
\,\right]
=\frac{1}{4}\cdot n^2\left(r'\right)^2\left(1\pm o(1)\right).
\end{eqnarray*}
\end{lemma}
}% merged into the above lemma 20161223 21:45

%Equations~(\ref{startoftheexpectedlengthofrandommatching})--(\ref{endofthemeanofsquare})
%and Lemma~\ref{distinctdistancesproduct}
%imply the following.

\begin{lemma}\label{thehardestparttoboundlemma}
With expectations taken over $\pi$,
%(but not $z$),
\begin{eqnarray}
\sum_{i=1}^{\lfloor|B(z,\delta nr)|/2\rfloor}\,
\mathop{\mathrm E}\left[\,
d^2\left(\pi\left(2i-1\right),\pi\left(2i\right)\right)
\,\right]
\le
%2\left(1+o(1)\right)\delta n^2r^2.
\left(1+o(1)
%\cdot\frac{1}{\delta^2}
\right)\left(\delta n^2rr'+2\delta^2 n r^2\right).
\label{boundforthedistancesquaressummed}
\end{eqnarray}
\end{lemma}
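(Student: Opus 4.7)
The plan is to expand the expectation using that $\pi$ is a uniformly random bijection, bound each squared distance by the product of itself and the diameter of $B(z,\delta nr)$, and then convert the resulting sum of distances to $r'$ via equation~(\ref{smallerballaveragedefinition}).

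First, since line~5 of {\sf Las Vegas median} makes $\pi$ uniformly random, each ordered pair $(\pi(2i-1),\pi(2i))$ is uniformly distributed over ordered pairs of distinct elements of $B(z,\delta nr)$. Writing $N:=|B(z,\delta nr)|$, this gives
$$
\mathop{\mathrm E}\left[\,d^2(\pi(2i-1),\pi(2i))\,\right]
=\frac{1}{N(N-1)}\sum_{\text{\rm distinct $u$, $v\in B(z,\delta nr)$}}d^2(u,v),
$$
and summing over $i\in[\,\lfloor N/2\rfloor\,]$ by linearity rewrites the left-hand side as $\frac{\lfloor N/2\rfloor}{N(N-1)}\sum_{\text{\rm distinct $u$, $v\in B(z,\delta nr)$}}d^2(u,v)$.

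Second, the triangle inequality combined with $d(u,z),d(v,z)<\delta nr$ yields $d(u,v)<2\delta nr$ for every $u,v\in B(z,\delta nr)$, hence $d^2(u,v)\le 2\delta nr\cdot d(u,v)$. Summing and using $d(u,u)=0$ together with equation~(\ref{smallerballaveragedefinition}),
$$
\sum_{\text{\rm distinct $u$, $v\in B(z,\delta nr)$}}d^2(u,v)
\le 2\delta nr\sum_{u,v\in B(z,\delta nr)}d(u,v)
=2\delta nr\cdot N^2r'.
$$

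Third, substituting and invoking Lemma~\ref{thesmallradiusballislarge} (which gives $N=(1-o(1))n$) to simplify $\frac{\lfloor N/2\rfloor\cdot N^2}{N(N-1)}$ to $(1+o(1))\frac{N}{2}$, then bounding $N\le n$, yields $(1+o(1))\delta n^2rr'$, the dominant summand of the claimed bound.

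The main obstacle is accounting for the second summand $2\delta^2 nr^2$: the clean estimate above produces $(1+o(1))\delta n^2 rr'$ alone, so the extra term must arise as a lower-order correction coming from $N<n$ and from replacing $\lfloor N/2\rfloor/(N-1)$ by $1/2$. I expect to control these remainders by combining Lemma~\ref{thesmallradiusballislarge}'s sharper statement $n-N\le 1/\delta$ with Lemma~\ref{inneraverageandoverallaverage}'s bound $r'\le 2r$, so that the leftover $\delta n r(n-N)r'/\Theta(n)$-type terms are absorbed into the $(1+o(1))\cdot 2\delta^2 nr^2$ allowance. Careful algebra to match the precise constants in that allowance is what I expect to occupy the bulk of the write-up.
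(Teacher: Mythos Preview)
Your approach is correct and in fact simpler than the paper's; the only issue is a self-imposed non-obstacle in your last paragraph. Your chain of inequalities already yields
\[
\sum_{i=1}^{\lfloor N/2\rfloor}\mathop{\mathrm E}\bigl[d^2(\pi(2i-1),\pi(2i))\bigr]
\le \frac{\lfloor N/2\rfloor}{N(N-1)}\cdot 2\delta nr\cdot N^2 r'
= \frac{N^2}{N-1}\cdot\frac{\lfloor N/2\rfloor}{N}\cdot \delta nr r'
\le (1+o(1))\,\delta n^2 rr',
\]
using only $N\le n$ and $N\to\infty$. Since $2\delta^2 nr^2\ge 0$, this is automatically at most $(1+o(1))(\delta n^2 rr'+2\delta^2 nr^2)$. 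There is nothing to ``account for'': the extra summand on the right-hand side of the lemma is pure slack relative to your bound, so no careful algebra with $n-N\le 1/\delta$ or $r'\le 2r$ is needed.

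By contrast, the paper takes a considerably more involved route: it formulates the maximization of $\sum_{u,v}d_{u,v}^2$ subject to $\sum_{u,v}d_{u,v}=N^2r'$ and $0\le d_{u,v}\le 2\delta nr$ as an explicit optimization problem ({\sc max square sum}), and solves it in an appendix by an exchange argument showing an optimal solution has all but one variable equal to $0$ or $2\delta nr$. This yields the bound $(\lfloor N^2 r'/(2\delta nr)\rfloor+1)(2\delta nr)^2$, whose ``$+1$'' is precisely the source of the $2\delta^2 nr^2$ term. Your one-line inequality $d_{u,v}^2\le (2\delta nr)\,d_{u,v}$ bypasses the optimization entirely and gives the tighter bound $2\delta nr\cdot N^2 r'$ directly. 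The paper's route buys an exact description of the extremal configuration, but that is not needed for the lemma; your argument is both shorter and sharper.
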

\begin{proof}
By line~5 of {\sf Las Vegas median},
$\{\pi(2i-1),\pi(2i)\}$ is a uniformly random size-$2$ subset of
$B(z,\delta nr)$ for each
$i\in[\,\lfloor|B(z,\delta nr)|/2\rfloor\,]$.
%hence
Therefore,
\begin{eqnarray}
&&\sum_{i=1}^{\lfloor|B(z,\delta nr)|/2\rfloor}\,
\mathop{\mathrm E}\left[\,
d^2\left(\pi\left(2i-1\right),\pi\left(2i\right)\right)
\,\right]\label{startofobjective}\\
&=&\sum_{i=1}^{\lfloor|B(z,\delta nr)|/2\rfloor}\,
\frac{1}{|B(z,\delta nr)|\cdot(|B(z,\delta nr)|-1)}
\cdot
\sum_{\text{distinct $u$, $v\in B(z,\delta nr)$}}\,d^2\left(u,v\right)\nonumber\\
&\le&\sum_{i=1}^{\lfloor|B(z,\delta nr)|/2\rfloor}\,
\frac{1}{|B(z,\delta nr)|\cdot(|B(z,\delta nr)|-1)}
\cdot
\sum_{u, v\in B(z,\delta nr)}\,d^2\left(u,v\right)
%\label{endofobjective}
\nonumber\\
&=&\left\lfloor\frac{|B(z,\delta nr)|}{2}\right\rfloor\cdot
\frac{1}{|B(z,\delta nr)|\cdot(|B(z,\delta nr)|-1)}
\cdot
\sum_{u, v\in B(z,\delta nr)}\,d^2\left(u,v\right).
%\label{endofobjective}
\nonumber
\end{eqnarray}
%where the last equality follows from the identity of indiscernibles.
For all $u$, $v\in B(z,\delta nr)$,
\begin{eqnarray}
d\left(u,v\right)\le d\left(z,u\right)+d\left(z,v\right)
\le\delta nr+\delta nr=2\delta nr,
\label{constraint}
\end{eqnarray}
where the first inequality follows from the triangle inequality.

By equations~(\ref{smallerballaveragedefinition})~and~(\ref{startofobjective})--(\ref{constraint}),
%equations~(\ref{startofobjective})--(\ref{endofobjective})
the left-hand side of inequality~(\ref{boundforthedistancesquaressummed})
cannot exceed the optimal value of the following problem, called {\sc max square sum}:\\
\begin{quote}
Find
%$d_{u,v}\in [\,0,\infty\,)$
$d_{u,v}\in \mathbb{R}$
for all $u$, $v\in B(z,\delta nr)$ to maximize
\begin{eqnarray}
%\sum_{i=1}^{\lfloor|B(z,\delta nr)|/2\rfloor}\,
\left\lfloor\frac{|B(z,\delta nr)|}{2}\right\rfloor\cdot
\frac{1}{|B(z,\delta nr)|\cdot(|B(z,\delta nr)|-1)}
\cdot
\sum_{u, v\in B(z,\delta nr)}\,d_{u,v}^2
\label{objectiveofoptimization}
\end{eqnarray}
subject to
\begin{eqnarray}
\frac{1}{|B(z,\delta nr)|^2}\cdot
\sum_{u,v\in B(z,\delta nr)}\, d_{u,v}
%&=&
=
%\left|B(z,\delta nr)\right|^2
r',\label{averagedistanceconstraint}\\
\forall
u, v\in B\left(z,\delta nr\right),\,\,
0\le
d_{u,v}
%&\le&
\le
2\delta nr.\label{largestdistanceconstraint}
%\,\,\,\,\,\left(\forall u, v\in B\left(z,\delta nr\right)\right).
\end{eqnarray}
\end{quote}
Above, constraint~(\ref{averagedistanceconstraint})
(resp., (\ref{largestdistanceconstraint}))
mimics equation~(\ref{smallerballaveragedefinition})
(resp., inequality~(\ref{constraint}) and the
non-negativeness of distances).
Appendix~\ref{analyzingthemaximizationproblem}
%shows
bounds
the
%The
optimal value of
{\sc max square sum}
from
%the
above by
%this problem
%is achieved
%does not exceed the value of objective~(\ref{objectiveofoptimization})
%when
%$d_{u,v}=2\delta nr$
%for
%$(|B(z,\delta nr)|^2 r')/(2\delta nr)+1$
%among $\{t_{u,v}\}_{u,v\in B(z,\delta nr)}$
%are $2\delta n r$, 
%pairs
%of
%$(u, v)\in B(z,\delta nr)\times B(z,\delta nr)$ and
%$d_{u,v}=0$ for all the other pairs $(u,v)\in B(z,\delta nr)\times B(z,\delta nr)$.
%Therefore,
%That is, it does not exceed
%It
%can be steadily verified to be
%is, therefore,
%to be at most
\begin{eqnarray}
%\sum_{i=1}^{\lfloor|B(z,\delta nr)|/2\rfloor}\,
\left\lfloor\frac{|B(z,\delta nr)|}{2}\right\rfloor
\frac{1}{|B(z,\delta nr)|\cdot(|B(z,\delta nr)|-1)}
\cdot
\left(\left\lfloor\frac{|B(z,\delta nr)|^2 r'}{2\delta nr}\right\rfloor+1\right)
\cdot \left(2\delta nr\right)^2.
\nonumber
\end{eqnarray}
This evaluates to be
at most
%$(1+o(1)/\delta^2)\delta n^2 rr'$
$(1+o(1))(\delta n^2 rr'+2\delta^2 n r^2)$
by Lemma~\ref{thesmallradiusballislarge}.
\comment{ % don't need this form now 20161228 14:30
Finally,
$$\left(1\pm o(1)\right)\delta n^2 rr'\le 2\left(1+o(1)\right)\delta n^2 r^2$$
by Lemma~\ref{inneraverageandoverallaverage}.
}% don't need this form now 20161228 14:30
\end{proof}

Recall that the variance of any random variable $X$
%is
equals
%$\mathop{\mathrm{var}}(X)=\mathop{\mathrm E}[X^2]-(\mathop{\mathrm E}[X])^2$.
$\mathop{\mathrm E}[X^2]-(\mathop{\mathrm E}[X])^2$.

\begin{lemma}\label{boundonthevarianceofthelengthofthematching}
With variances taken over $\pi$,
%(but not $z$),
$$
\mathop{\mathrm{var}}\left(
\sum_{i=1}^{\lfloor|B(z,\delta nr)|/2\rfloor}\,
d\left(\pi\left(2i-1\right),\pi\left(2i\right)\right)
\right)
\le 2\left(1+o(1)
%\cdot\frac{1}{\delta^4}
\right)\delta n^2 r^2.
$$
\end{lemma}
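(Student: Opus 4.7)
The plan is to write $\mathrm{var}(X) = \mathrm{E}[X^2] - (\mathrm{E}[X])^2$ where $X$ denotes the matching sum $\sum_{i=1}^{\lfloor|B(z,\delta nr)|/2\rfloor} d(\pi(2i-1),\pi(2i))$, and then assemble the bound from the three lemmas already in place. The point of the lemma is that the two leading contributions to $\mathrm{E}[X^2]$ and $(\mathrm{E}[X])^2$ cancel up to a lower order term, leaving only the diagonal contribution.

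First I would use equations~(\ref{startofthemeanofsquare})--(\ref{endofthemeanofsquare}) to split $\mathrm{E}[X^2]$ into the ``distinct $i,j$'' sum and the diagonal ``$i=j$'' sum. Lemma~\ref{distinctdistancesproduct} bounds the distinct part by $\frac{1}{4}(1+o(1))\,n^2(r')^2$, while Lemma~\ref{thehardestparttoboundlemma} bounds the diagonal part by $(1+o(1))(\delta n^2 r r' + 2\delta^2 n r^2)$. Lemma~\ref{squareofthemeanlemma} gives $\mathrm{E}[X] = \frac{1}{2}(1\pm o(1))\,nr'$, so $(\mathrm{E}[X])^2 = \frac{1}{4}(1\pm o(1))\,n^2(r')^2$. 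Subtracting, the two $\frac{1}{4}n^2(r')^2$ quantities cancel up to an $o(n^2(r')^2)$ discrepancy, and what remains is
\[
\mathrm{var}(X) \le o\bigl(n^2 (r')^2\bigr) + (1+o(1))\bigl(\delta n^2 r r' + 2\delta^2 n r^2\bigr).
\]

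Next I would invoke Lemma~\ref{inneraverageandoverallaverage}, i.e.\ $r'\le 2r$, to convert all $r'$ factors into $r$. This bounds the first term by $o(n^2 r^2)$ and the $\delta n^2 rr'$ term by $2\delta n^2 r^2$. The last ingredient is to absorb the stray $2\delta^2 n r^2$ into $2\delta n^2 r^2$ (it is smaller by a factor $\delta/n = o(1)$) and to note that for a fixed positive constant $\delta$, $o(n^2 r^2) = o(1)\cdot \delta n^2 r^2$, so everything collapses into $2(1+o(1))\,\delta n^2 r^2$, which is exactly the claimed bound.

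The only subtle step is the cancellation between $(\mathrm{E}[X])^2$ and the distinct-pair portion of $\mathrm{E}[X^2]$: both are $\frac{1}{4}n^2(r')^2$ to leading order, and if one were not careful the variance bound would be dominated by $n^2(r')^2$ rather than $\delta n^2 r^2$. This is where the $(1\pm o(1))$ precision in Lemmas~\ref{squareofthemeanlemma}~and~\ref{distinctdistancesproduct} is essential; after that, the rest is just routine bookkeeping using $r'\le 2r$.
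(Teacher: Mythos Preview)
Your proposal is correct and follows essentially the same route as the paper: split $\mathrm{E}[X^2]$ via equations~(\ref{startofthemeanofsquare})--(\ref{endofthemeanofsquare}), bound the two pieces by Lemmas~\ref{distinctdistancesproduct}--\ref{thehardestparttoboundlemma}, subtract $(\mathrm{E}[X])^2$ from Lemma~\ref{squareofthemeanlemma} so that the $\tfrac{1}{4}n^2(r')^2$ terms cancel up to $o(1)$, and then finish with $r'\le 2r$ from Lemma~\ref{inneraverageandoverallaverage}. Your added remarks about absorbing $2\delta^2 nr^2$ and about $o(n^2r^2)=o(1)\cdot\delta n^2r^2$ (since $\delta$ is a fixed constant) just make explicit the bookkeeping the paper leaves implicit.
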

\begin{proof}
By equations~(\ref{startofthemeanofsquare})--(\ref{endofthemeanofsquare})
and
Lemmas~\ref{distinctdistancesproduct}--\ref{thehardestparttoboundlemma},
{\small % the equation is long 20170105 19:07
\begin{eqnarray*}
\mathop{\mathrm E}\left[\,
\left(\sum_{i=1}^{\lfloor|B(z,\delta nr)|/2\rfloor}\,
d\left(\pi\left(2i-1\right),\pi\left(2i\right)\right)\right)^2
\,\right]
\le
\frac{1}{4}\cdot \left(1
%\pm
+
o(1)
%\cdot\frac{1}{\delta^3}
\right) n^2 \left(r'\right)^2
+\left(1+o(1)
%\cdot\frac{1}{\delta^2}
\right)\left(\delta n^2 rr'+2\delta^2 nr^2\right).
\end{eqnarray*}
}% the equation is long 20170105 19:07
This and
Lemma~\ref{squareofthemeanlemma} imply
$$
\mathop{\mathrm{var}}\left(
\sum_{i=1}^{\lfloor|B(z,\delta nr)|/2\rfloor}\,
d\left(\pi\left(2i-1\right),\pi\left(2i\right)\right)
\right)
\le
o(1)
%\cdot \frac{1}{\delta^3}
\cdot
n^2\left(r'\right)^2
+\left(1+o(1)
%\cdot\frac{1}{\delta^2}
\right)\left(\delta n^2 rr'+2\delta^2 nr^2\right).
$$
Finally, invoke Lemma~\ref{inneraverageandoverallaverage}.
\end{proof}

\begin{lemma}\label{randommatchingconcentrationlemma}
For
%each constant
all
$k>1$,
$$
\Pr\left[\,
\left|\,
\left(
\sum_{i=1}^{\lfloor|B(z,\delta nr)|/2\rfloor}\,
d\left(\pi\left(2i-1\right),\pi\left(2i\right)\right)
\right)
%\le
-
\frac{1}{2}\cdot\left(1\pm o(1)
%\cdot\frac{1}{\delta^2}
\right)nr'
\,\right|
%-
\ge
k\sqrt{2\left(1+o(1)
%\cdot\frac{1}{\delta^4}
\right)\delta}\, nr
\,\right]
\le\frac{1}{k^2},
$$
where the probability is taken over $\pi$.
%(but not $z$).
\end{lemma}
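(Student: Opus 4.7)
The plan is to apply Chebyshev's inequality directly to the random variable
$$X \;\equiv\; \sum_{i=1}^{\lfloor|B(z,\delta nr)|/2\rfloor}\, d\left(\pi(2i-1),\pi(2i)\right),$$
using the two preceding lemmas, which already supply exactly the quantities Chebyshev needs.

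First I would identify $\mathop{\mathrm{E}}[X] = \tfrac{1}{2}(1\pm o(1))\,nr'$ from Lemma~\ref{squareofthemeanlemma}, and $\mathop{\mathrm{var}}(X) \le 2(1+o(1))\,\delta n^2 r^2$ from Lemma~\ref{boundonthevarianceofthelengthofthematching}. Taking square roots gives $\sqrt{\mathop{\mathrm{var}}(X)} \le \sqrt{2(1+o(1))\delta}\,nr$, which matches exactly the deviation parameter appearing in the statement of the lemma.

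Next, for any $k>1$, Chebyshev's inequality yields
$$\Pr\!\left[\, |X - \mathop{\mathrm{E}}[X]| \ge k\sqrt{\mathop{\mathrm{var}}(X)}\,\right] \le \frac{1}{k^2}.$$
Substituting $\mathop{\mathrm{E}}[X]$ and the upper bound on $\sqrt{\mathop{\mathrm{var}}(X)}$, and using monotonicity of the probability with respect to enlarging the deviation threshold (so that replacing $k\sqrt{\mathop{\mathrm{var}}(X)}$ by the larger $k\sqrt{2(1+o(1))\delta}\,nr$ can only decrease the probability), we obtain the claimed bound.

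There is no real obstacle here: the substantive work has been done in Lemmas~\ref{squareofthemeanlemma} and~\ref{boundonthevarianceofthelengthofthematching}, and this lemma is a one-line plug-in. The only minor nuance to be careful about is to observe that Chebyshev's inequality (as stated) requires a finite nonzero variance; if $\mathop{\mathrm{var}}(X)=0$ then $X$ equals $\tfrac{1}{2}(1\pm o(1))\,nr'$ deterministically and the probability is trivially zero, so the stated bound holds in either case.
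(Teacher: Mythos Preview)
Your proposal is correct and matches the paper's own proof, which consists of the single sentence ``Use Chebyshev's inequality and Lemmas~\ref{squareofthemeanlemma}~and~\ref{boundonthevarianceofthelengthofthematching}.'' Your added remarks about monotonicity in the deviation threshold and the degenerate zero-variance case are fine elaborations but not needed for the paper's level of detail.
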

\begin{proof}
Use
Chebyshev's inequality
%Fact~\ref{chebyshevlower}
and
Lemmas~\ref{squareofthemeanlemma}~and~\ref{boundonthevarianceofthelengthofthematching}.
\end{proof}

Let $z'\in B(z,\delta nr)$ be a $1$-median of $B(z,\delta nr)$, i.e.,
\begin{eqnarray}
z'=\mathop{\mathrm{argmin}}_{y\in B(z,\delta nr)}\,
\sum_{x\in B(z,\delta nr)}\, d\left(y,x\right),
\nonumber
\end{eqnarray}
breaking ties arbitrarily.
So by the averaging argument,
\begin{eqnarray}
\sum_{x\in B(z,\delta nr)}\,d\left(z',x\right)
\le
\frac{1}{|B(z,\delta nr)|}
\cdot\sum_{y\in B(z,\delta nr)}\,\sum_{x\in B(z,\delta nr)}\,d\left(y,x\right).
%\stackrel{\text{(\ref{smallerballaveragedefinition})}}{=}
%\left|B\left(z,\delta nr\right)\right|\cdot r'.
\label{simpleaveraging}
\end{eqnarray}

\begin{lemma}\label{1medianinthesmallerballhasatmosttheaveragedistance}
$$\sum_{x\in B(z,\delta nr)}\,d\left(z',x\right)\le
%\left(1-o(1)
%\cdot\frac{1}{\delta}
%\right)
nr'.$$
\end{lemma}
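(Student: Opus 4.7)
The plan is to chain three immediate observations, each of which is already set up in the text immediately preceding the lemma statement. First, I invoke inequality~(\ref{simpleaveraging}), which has already been derived from the optimality of $z'$ as a $1$-median of $B(z,\delta nr)$ via the standard averaging argument (the minimum of any nonempty finite set of reals is at most its average). This gives
$$\sum_{x\in B(z,\delta nr)}\,d\left(z',x\right)\le \frac{1}{|B(z,\delta nr)|}\cdot \sum_{y,x\in B(z,\delta nr)}\,d\left(y,x\right).$$

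Second, I rewrite the double sum on the right using the defining equation~(\ref{smallerballaveragedefinition}) of $r'$, which states that $\sum_{u,v\in B(z,\delta nr)}\,d(u,v)=|B(z,\delta nr)|^2\cdot r'$. Substituting this simplifies the right-hand side above to $|B(z,\delta nr)|\cdot r'$.

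Third, since $B(z,\delta nr)\subseteq [n]$, we have $|B(z,\delta nr)|\le n$, and combining with $r'\ge 0$ yields
$$\sum_{x\in B(z,\delta nr)}\,d\left(z',x\right)\le |B(z,\delta nr)|\cdot r'\le nr',$$
as claimed. There is no real obstacle here: the lemma is essentially a bookkeeping step that rewrites the averaging bound (\ref{simpleaveraging}) in terms of $r'$ and then uses the trivial cardinality bound $|B(z,\delta nr)|\le n$ to replace the set size by $n$, giving the clean form needed in the subsequent concentration arguments.
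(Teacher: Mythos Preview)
Your proof is correct and follows exactly the same approach as the paper: invoke the averaging bound~(\ref{simpleaveraging}), rewrite the double sum via the definition~(\ref{smallerballaveragedefinition}) of $r'$ to obtain $|B(z,\delta nr)|\cdot r'$, and finish with $|B(z,\delta nr)|\le n$.
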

\begin{proof}
We have
%By inequality~(\ref{simpleaveraging}) and equation~(\ref{smallerballaveragedefinition}),
\begin{eqnarray}
\sum_{x\in B(z,\delta nr)}\,d\left(z',x\right)
\stackrel{\text{(\ref{simpleaveraging})}}{\le}
\frac{1}{|B(z,\delta nr)|}\cdot \sum_{u,v\in B(z,\delta nr)}\,d\left(u,v\right)
\stackrel{\text{(\ref{smallerballaveragedefinition})}}{=}
\left|B\left(z,\delta nr\right)\right|\cdot r'.
\nonumber
\end{eqnarray}
%This and Lemma~\ref{thesmallradiusballislarge} complete the proof.
Clearly, $|B\left(z,\delta nr\right)|\le n$.
\end{proof}

\begin{lemma}\label{theinner1medianisclosetotheoverall1median}
For all sufficiently large $n$,
%$$z'\in B\left(z,8r\right).$$
$$d\left(z',z\right)\le8r.$$
\end{lemma}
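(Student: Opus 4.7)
The plan is to apply the triangle inequality in the form $d(z',z) \le d(z',x) + d(z,x)$ for each $x \in B(z,\delta nr)$ and then average over $x \in B(z,\delta nr)$. Summing these $|B(z,\delta nr)|$ inequalities yields
\[
|B(z,\delta nr)|\cdot d(z',z) \;\le\; \sum_{x \in B(z,\delta nr)} d(z',x) \;+\; \sum_{x \in B(z,\delta nr)} d(z,x).
\]

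For the first sum, I would invoke Lemma~\ref{1medianinthesmallerballhasatmosttheaveragedistance} to bound it by $nr'$, and then use Lemma~\ref{inneraverageandoverallaverage} to upgrade this to $2nr$. For the second sum, I would simply use $\sum_{x\in B(z,\delta nr)} d(z,x) \le \sum_{x\in[n]} d(z,x) = nr$ by line~4 of \textsf{Las Vegas median}. Combining these bounds produces
\[
|B(z,\delta nr)|\cdot d(z',z) \;\le\; 2nr + nr \;=\; 3nr.
\]

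Finally, by Lemma~\ref{thesmallradiusballislarge}, $|B(z,\delta nr)| \ge n - 1/\delta = (1-o(1))n$. Dividing both sides gives
\[
d(z',z) \;\le\; \frac{3nr}{(1-o(1))n} \;=\; (3+o(1))\, r,
\]
which is at most $8r$ for all sufficiently large $n$. There is no real obstacle here: the lemma has considerable slack (the sharp bound from this argument is essentially $3r$), and the only subtlety is keeping careful track of the $1/\delta$ term in $|B(z,\delta nr)| \ge n-1/\delta$, which is asymptotically negligible once $n$ is large relative to the constant $1/\delta$.
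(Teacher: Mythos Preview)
Your proof is correct and is essentially the same as the paper's: both apply the triangle inequality pointwise over $x\in B(z,\delta nr)$, bound $\sum_{x\in B(z,\delta nr)} d(z',x)$ by $2nr$ via Lemmas~\ref{inneraverageandoverallaverage} and~\ref{1medianinthesmallerballhasatmosttheaveragedistance}, bound $\sum_{x\in B(z,\delta nr)} d(z,x)$ by $nr$ via line~4, and divide by $|B(z,\delta nr)|=(1-o(1))n$ to obtain $d(z',z)\le(3+o(1))r$. The only cosmetic difference is that the paper writes the triangle inequality in the rearranged form $d(z',x)\ge d(z',z)-d(z,x)$ before summing.
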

\begin{proof}
%By the triangle inequality,
We have
\begin{eqnarray}
\sum_{x\in B(z,\delta nr)}\,d\left(z',x\right)
&\ge&\sum_{x\in B(z,\delta nr)}\,\left(d\left(z',z\right)-d\left(z,x\right)\right)
\label{firstinequalityprovingthemedianofthesmallballtobeclosetothatofthelargeball}\\
&\ge&\sum_{x\in B(z,\delta nr)}\,d\left(z',z\right)
-\sum_{x\in [n]}\,d\left(z,x\right)\nonumber\\
&=&\left(\sum_{x\in B(z,\delta nr)}\,d\left(z',z\right)\right)
-nr\nonumber\\
&=&\left|\,B\left(z,\delta nr\right)\,\right|\cdot d\left(z',z\right)
-nr,\nonumber
\end{eqnarray}
where the first
inequality (resp., the first
equality) follows from
the triangle inequality (resp.,
line~4 of {\sf Las Vegas median}).
%By Lemma~\ref{1medianinthesmallerballhasatmosttheaveragedistance})~and~
By
%Lemmas~\ref{thesmallradiusballislarge}--\ref{inneraverageandoverallaverage}~and~\ref{1medianinthesmallerballhasatmosttheaveragedistance},
Lemmas~\ref{inneraverageandoverallaverage}~and~\ref{1medianinthesmallerballhasatmosttheaveragedistance},
\begin{eqnarray}
\sum_{x\in B(z,\delta nr)}\,d\left(z',x\right)
\le 2
%\left(1-o(1)
%\cdot\frac{1}{\delta}
%\right)
nr.
\label{secondinequalityprovingthemedianofthesmallballtobeclosetothatofthelargeball}
\end{eqnarray}
By
inequalities~(\ref{firstinequalityprovingthemedianofthesmallballtobeclosetothatofthelargeball})--(\ref{secondinequalityprovingthemedianofthesmallballtobeclosetothatofthelargeball})
and
Lemma~\ref{thesmallradiusballislarge},
$d(z',z)\le (3+ o(1))r$.\footnote{In fact, this is stronger than the lemma to be proved.}
%we show that $z'\in B(z,(3+\lambda)r)$ for any constant $\lambda>0$ and all sufficiently large $n$.}
\comment{ % we do not need proof by contradiction 20170105
If $z'\notin B(z,8r)$, then
\begin{eqnarray*}
\sum_{x\in B(z,\delta nr)}\,d\left(z',z\right)
\ge 8r\cdot\left|\,B\left(z,\delta nr\right)\,\right|
\stackrel{\text{Lemma~\ref{thesmallradiusballislarge}}}{=}
8\left(1-o(1)
%\cdot\frac{1}{\delta}
\right)nr,
\end{eqnarray*}
contradicting
inequalities~(\ref{firstinequalityprovingthemedianofthesmallballtobeclosetothatofthelargeball})--(\ref{secondinequalityprovingthemedianofthesmallballtobeclosetothatofthelargeball})
for all sufficiently large $n$.
}% we do not need proof by contradiction 20170105
\end{proof}

\begin{lemma}\label{innerballmediantotaldistance}
For all sufficiently large $n$,
$$
\sum_{x\in [n]}\,d\left(z',x\right)
\le
%\left(1-o(1)
%\cdot\frac{1}{\delta}
%\right)
nr'+\frac{16r}{\delta}
+\sum_{x\in [n]\setminus B(z,\delta nr)}\,\left(d\left(z,x\right)-8r\right).
$$
\end{lemma}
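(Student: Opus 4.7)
The plan is to split the sum over $[n]$ into contributions from $B(z,\delta nr)$ and from $[n]\setminus B(z,\delta nr)$, bound each piece using the previously established lemmas, and then rearrange to match the right-hand side.

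First I would write
\begin{eqnarray*}
\sum_{x\in [n]}\,d(z',x)
=\sum_{x\in B(z,\delta nr)}\,d(z',x)
+\sum_{x\in [n]\setminus B(z,\delta nr)}\,d(z',x).
\end{eqnarray*}
Lemma~\ref{1medianinthesmallerballhasatmosttheaveragedistance} bounds the first sum directly by $nr'$.

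For the second sum I would apply the triangle inequality $d(z',x)\le d(z',z)+d(z,x)$ to each term, and then invoke Lemma~\ref{theinner1medianisclosetotheoverall1median} (valid for all sufficiently large $n$) to replace $d(z',z)$ by $8r$. This yields
\begin{eqnarray*}
\sum_{x\in [n]\setminus B(z,\delta nr)}\,d(z',x)
\le \sum_{x\in [n]\setminus B(z,\delta nr)}\,\bigl(d(z,x)+8r\bigr).
\end{eqnarray*}
The only remaining issue is to convert the $+8r$ on each term into the desired $-8r$ plus a small additive slack. Writing $d(z,x)+8r=(d(z,x)-8r)+16r$ and summing gives
\begin{eqnarray*}
\sum_{x\in [n]\setminus B(z,\delta nr)}\,\bigl(d(z,x)+8r\bigr)
= \sum_{x\in [n]\setminus B(z,\delta nr)}\,\bigl(d(z,x)-8r\bigr)
+16r\cdot\bigl|[n]\setminus B(z,\delta nr)\bigr|.
\end{eqnarray*}
By Lemma~\ref{thesmallradiusballislarge}, $|[n]\setminus B(z,\delta nr)|\le 1/\delta$, so the last term is at most $16r/\delta$. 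Combining the bounds on the two pieces completes the proof.

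There is no genuine obstacle here — every ingredient is already in place. The only mild subtlety is choosing the decomposition $d(z,x)+8r=(d(z,x)-8r)+16r$ so that the residual term matches the algorithm's witness in line~6 of {\sf Las Vegas median}; the factor of $1/\delta$ from Lemma~\ref{thesmallradiusballislarge} is exactly what forces the $16r/\delta$ slack in the statement.
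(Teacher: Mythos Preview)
Your proof is correct and follows essentially the same approach as the paper: split the sum over $[n]$ into $B(z,\delta nr)$ and its complement, bound the first piece by Lemma~\ref{1medianinthesmallerballhasatmosttheaveragedistance}, and for the second piece use the triangle inequality together with Lemmas~\ref{theinner1medianisclosetotheoverall1median} and~\ref{thesmallradiusballislarge}, rewriting $d(z,x)+8r=(d(z,x)-8r)+16r$ to produce the $16r/\delta$ slack. The paper's presentation simply compresses your last two steps into a single chain of inequalities.
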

\begin{proof}
By the triangle inequality,
%We have
\begin{eqnarray}
%\sum_{x\in[n]}\,d\left(z',x\right)
%&=&
\sum_{x\in[n]\setminus B(z,\delta nr)}\,d\left(z',x\right)
%+\sum_{x\in B(z,\delta nr)}\,d\left(z',x\right)\nonumber\\
&\le& \sum_{x\in[n]\setminus B(z,\delta nr)}\,\left(d\left(z',z\right)+d\left(z,x\right)\right)
%\label{inner1mediantofarpoints}\\
%+\sum_{x\in B(z,\delta nr)}\,d\left(z',x\right)
\nonumber\\
&\stackrel{\text{Lemma~\ref{theinner1medianisclosetotheoverall1median}}}{\le}&
\sum_{x\in[n]\setminus B(z,\delta nr)}\,\left(8r+d\left(z,x\right)\right)
%+\sum_{x\in B(z,\delta nr)}\,d\left(z',x\right)
\nonumber\\
&\stackrel{\text{Lemma~\ref{thesmallradiusballislarge}}}{\le}&
\frac{16r}{\delta}
+
\sum_{x\in [n]\setminus B(z,\delta nr)}\,\left(d\left(z,x\right)-8r\right).
%+\sum_{x\in B(z,\delta nr)}\,d\left(z',x\right),
%\label{inner1mediantofarpoints2}
\nonumber
\end{eqnarray}
%where the first inequality follows from the triangle inequality.
%These and
Now sum up the above with the inequality in
%Now invoke
Lemma~\ref{1medianinthesmallerballhasatmosttheaveragedistance}.
%complete the proof.
\end{proof}

\begin{lemma}\label{probabilityofsatisfyingthecondition}
For all sufficiently large $n$ and
with
%With
probability
%at least
greater than
%$1-\sqrt{\delta}-1/3$,
%$\Omega(1)$,
$1/2$,
\begin{eqnarray}
\sum_{i=1}^{\lfloor|B(z,\delta nr)|/2\rfloor}\,
d\left(\pi\left(2i-1\right),\pi\left(2i\right)\right)
+\sum_{x\in [n]\setminus B\left(z,\delta nr\right)}\,\left(d(z,x)-8r\right)
\ge\frac{1-100\sqrt{\delta}}{2}\cdot nr,
\label{theconditiontobetested}
\end{eqnarray}
%for all sufficiently large $n$.
where the probability is taken over $\pi$ and the internal coin tosses of {\sf Indyk median}
in line~3 of {\sf Las Vegas median}.
\end{lemma}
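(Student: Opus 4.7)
Write $S \equiv \sum_{i=1}^{\lfloor|B(z,\delta nr)|/2\rfloor} d(\pi(2i-1),\pi(2i))$ and $T \equiv \sum_{x\in [n]\setminus B(z,\delta nr)}(d(z,x)-8r)$, so that the goal is to lower bound $\Pr[S+T \ge (1-100\sqrt{\delta})nr/2]$ by more than $1/2$. A first observation, used throughout, is that $T \ge 0$ for all $n \ge 8/\delta$, because any $x \notin B(z,\delta nr)$ satisfies $d(z,x) \ge \delta nr \ge 8r$. The strategy is a union bound over two good events: (a) {\sf Indyk median} returns a $(1+\epsilon/10^{10})$-approximate $1$-median, which by Fact~\ref{Indykfact} has probability at least $1-1/e$; and (b) $S$ lies within $3$ standard deviations of its mean, which by Lemma~\ref{randommatchingconcentrationlemma} (with $k=3$) has probability at least $8/9$.

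On event (a), chaining Indyk's guarantee with Lemma~\ref{innerballmediantotaldistance} applied to the inner median $z'$ gives
$$nr = \sum_{x\in[n]} d(z,x) \le \left(1+\frac{\epsilon}{10^{10}}\right)\sum_{x\in[n]} d(z',x) \le \left(1+\frac{\epsilon}{10^{10}}\right)\left(nr' + \frac{16r}{\delta} + T\right),$$
which rearranges to $nr' \ge (1-o(1))nr - 16r/\delta - T$ after absorbing the constant $\epsilon/10^{10}$ factor into $o(1)$. On event (b), Lemma~\ref{randommatchingconcentrationlemma} yields $S \ge \frac{1}{2}(1-o(1))nr' - 3\sqrt{2(1+o(1))\delta}\,nr$. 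Substituting the lower bound on $nr'$, using $T \ge 0$ so that the $-\frac{1}{2}(1-o(1))T$ absorbed into $S$ is dominated by the $+T$ outside, and observing that $16r/\delta = o(nr)$ for constant $\delta>0$, I expect to obtain
$$S + T \ge \frac{nr}{2}\Bigl(1 - 6\sqrt{2\delta}\,(1+o(1)) - o(1)\Bigr) \ge \frac{(1-100\sqrt{\delta})nr}{2}$$
for all sufficiently large $n$, since $6\sqrt{2}<100$. A union bound over the complements of (a) and (b) then yields the required probability at least $1 - 1/e - 1/9 > 1/2$.

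The main difficulty is purely bookkeeping rather than a new probabilistic idea: I need to verify that the algorithm's slack constant $100$ in the threshold $(1-100\sqrt{\delta})nr/2$ comfortably absorbs both the Chebyshev deviation ($\approx 6\sqrt{2}\sqrt{\delta}$) and the multiplicative $(1+\epsilon/10^{10})$ loss inherited from Indyk, while the additive term $16r/\delta$ becomes negligible compared with $nr$ as $n$ grows. A minor subtlety is ensuring that the coefficient of $T$ in the final combination is nonnegative; this is exactly why Lemma~\ref{innerballmediantotaldistance} is written with $+T$ and why the concentration bound applies to $\frac{1}{2}(1-o(1))nr'$, so that after substitution only $\frac{1}{2}(1 + o(1))T \ge 0$ remains.
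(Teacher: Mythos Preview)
Your approach is essentially the paper's: use Chebyshev on $S$, use Indyk's guarantee to link $nr$ with $nr'+T$ via Lemma~\ref{innerballmediantotaldistance}, then combine and compare with the $100\sqrt{\delta}$ slack. The paper takes $k=5$ (giving $1-1/25-1/e>1/2$) whereas you take $k=3$ (giving $1-1/9-1/e>1/2$); both work. Your explicit observation that $T\ge 0$ once $n\ge 8/\delta$ is exactly what the paper uses implicitly when it divides through by $(2+2\epsilon/10^{10})(1+o(1))$ in its final footnote.

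One point needs correcting. You write that the factor $\epsilon/10^{10}$ can be ``absorbed into $o(1)$''. It cannot: $\epsilon$ is a fixed constant independent of $n$, so $\epsilon/10^{10}$ is never $o(1)$. What makes the argument go through is the relation in line~1 of {\sf Las Vegas median}, namely $2+\epsilon=2/(1-100\sqrt{\delta})$, which forces $\epsilon/10^{10}<\sqrt{\delta}$ (indeed $\epsilon/10^{10}\approx 2\cdot 10^{-8}\sqrt{\delta}$ for small $\delta$). Thus the Indyk loss is absorbed by the $100\sqrt{\delta}$ slack, not by any $o(1)$ term. The paper makes precisely this observation in its closing footnote. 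Your own summary paragraph actually says this correctly (``the slack constant $100$ \ldots\ absorbs \ldots\ the multiplicative $(1+\epsilon/10^{10})$ loss''), so the error is confined to the displayed rearrangement $nr'\ge(1-o(1))nr-16r/\delta-T$; replace $(1-o(1))$ there by $(1+\epsilon/10^{10})^{-1}$ and carry that constant through to the final comparison with $1-100\sqrt{\delta}$.
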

\begin{proof}
By Lemma~\ref{randommatchingconcentrationlemma} with
%$k=\delta^{-1/4}$,
$k=5$,
\begin{eqnarray}
\sum_{i=1}^{\lfloor|B(z,\delta nr)|/2\rfloor}\,
d\left(\pi\left(2i-1\right),\pi\left(2i\right)\right)
%\ge
>
\frac{1}{2}\cdot\left(1\pm o(1)\right)nr'
-5\sqrt{2\left(1+o(1)\right)\delta}\, nr
\label{matchingislarge}
\end{eqnarray}
with probability at least
%$1-\sqrt{\delta}$.
$1-1/25$.
%Because line~3 of {\sc Las Vegas approx.}\
%outputs a $(1+\delta)$-approximate $1$-median with probability at least $2/3$,
%runs Indyk's Monte Carlo $(1+\delta)$-approximation algorithm,
By Fact~\ref{Indykfact} and line~3 of {\sf Las Vegas median},
\begin{eqnarray}
\sum_{x\in[n]}\,d\left(z,x\right)
&\le&\left(1+\frac{\epsilon}{10^{10}}\right)\cdot\min_{y\in[n]}\,\sum_{x\in[n]}\,d\left(y,x\right)
\label{thenearoptimalisreal1}\\
&\le&
\left(1+\frac{\epsilon}{10^{10}}\right)\cdot\sum_{x\in[n]}\,d\left(z',x\right)
\label{thenearoptimalisreal}
\end{eqnarray}
with probability at least
%$2/3$.
$1-1/e$.
%In summary,
Now by the union bound,
inequalities~(\ref{matchingislarge})--(\ref{thenearoptimalisreal}) hold
simultaneously
with probability at least
%$2/3-1/25=\Omega(1)$.
$1-1/25-1/e>1/2$.
It remains to derive inequality~(\ref{theconditiontobetested})
from inequalities~(\ref{matchingislarge})--(\ref{thenearoptimalisreal})
for all sufficiently large $n$.

Line~4 of {\sf Las Vegas median},
%inequality~(\ref{thenearoptimalisreal}) and
inequalities~(\ref{thenearoptimalisreal1})--(\ref{thenearoptimalisreal}) and
Lemma~\ref{innerballmediantotaldistance}
%and
give
\begin{eqnarray}
%\sum_{x\in[n]}\,d\left(z,x\right)
nr
\le
\left(1+\frac{\epsilon}{10^{10}}\right)\left(
%\left(1-o(1)\right)
nr'+\frac{16r}{\delta}
+\sum_{x\in [n]\setminus B(z,\delta nr)}\,\left(d\left(z,x\right)-8r\right)
\right).
\end{eqnarray}
This and inequality~(\ref{matchingislarge}) imply
{\small % the equations are too long 20161225 20:14
\begin{eqnarray}
%\sum_{x\in[n]}\,d\left(z,x\right)
&&nr\nonumber\\
&\le&
\left(1+\frac{\epsilon}{10^{10}}\right)\left(
%\left(1-o(1)\right)
%
2\left(1\pm o(1)\right)\left[
\left(\sum_{i=1}^{\lfloor|B(z,\delta nr)|/2\rfloor}\,
d\left(\pi(2i-1),\pi(2i)\right)\right)
+5\sqrt{2\left(1+o(1)\right)\delta}\, nr
\right]
\right.\nonumber\\
&&\left.
%\ge\frac{1}{2}\cdotnr'
+\frac{16r}{\delta}
+\sum_{x\in [n]\setminus B(z,\delta nr)}\,\left(d\left(z,x\right)-8r\right)
\right).
\label{thefinalinequalitydirtyform}
\end{eqnarray}
}% the equations are too long 20161225 20:14
\footnote{To see this, rewrite inequality~(\ref{matchingislarge})
as
$$nr'<
2\left(1\pm o(1)\right)
\left[
\left(\sum_{i=1}^{\lfloor|B(z,\delta nr)|/2\rfloor}\,
d\left(\pi\left(2i-1\right),\pi\left(2i\right)\right)\right)
+5\sqrt{2\left(1+o(1)\right)\delta}\, nr
\right].$$
}
%For all sufficiently large $n$,
Clearly,
%$2(1\pm o(1))5\sqrt{2(1+o(1))\delta}\,nr$ absorbs $16r/\delta$.
%$16r/\delta=o(\sqrt{\delta}\,nr)$.
$16r/\delta\le 0.01\cdot\sqrt{\delta}\,nr$ for all sufficiently large $n$.
%\footnote{To see this,
%$$2\left(1\pm o(1)\right)5\sqrt{2\left(1+o(1)\right)\delta}\,nr+\frac{16r}{\delta}
%=2\left(1\pm o(1)\right)
%5\sqrt{2\left(1+o(1)\right)\delta}\,nr,$$
%albeit for a larger $o(1)$ term in the right-hand side.}
So inequality~(\ref{thefinalinequalitydirtyform})
%becomes,
implies,
for all sufficiently large $n$ and after laborious calculations,
{\footnotesize % the equation is rather long 20170127 15:09
\begin{eqnarray*}
&&nr-\left(1+\frac{\epsilon}{10^{10}}\right)
%5
%6
11
\sqrt{2\left(1+o(1)\right)\delta}\,nr\\
&\le& \left(2+\frac{2\epsilon}{10^{10}}\right)\left(1+o(1)\right)
\sum_{i=1}^{\lfloor|B(z,\delta nr)|/2\rfloor}\,d\left(\pi(2i-1),\pi(2i)\right)
+\left(1+\frac{\epsilon}{10^{10}}\right)
\cdot
\sum_{x\in[n]\setminus B(z,\delta nr)}\,\left(d\left(z,x\right)-8r\right).
\end{eqnarray*}
}% the equation is rather long 20170127 15:09
This implies inequality~(\ref{theconditiontobetested})
for all sufficiently large $n$
%after tedious calculations
(note that
%$\sqrt{\delta}\gg\delta$ for a small constant $\delta>0$ and that
$\epsilon/10^{10}<\sqrt{\delta}$ by line~1
of {\sf Las Vegas Median}).\footnote{Divide both sides by $(2+2\epsilon/10^{10})(1+o(1))$ so that
the coefficient before $\sum_{i=1}^{\lfloor|B(z,\delta nr)|/2\rfloor}\,d(\pi(2i-1),\pi(2i))$
becomes $1$ in the
right-hand side. Then verify the left-hand side
(which is now $(nr-(1+\epsilon/10^{10})11\sqrt{2(1+o(1))\delta}\,nr)/((2+2\epsilon/10^{10})(1+o(1)))$)
to be
%greater than
at least
$(1-100\sqrt{\delta})nr/2$
%as $n\to\infty$.
for all sufficiently large $n$.}
\end{proof}

Lemma~\ref{probabilityofsatisfyingthecondition} and lines~6--7 of {\sf Las Vegas median}
show the probability of termination in any iteration to be
$\Omega(1)$.
%Note that
Because
the proof of Lemma~\ref{probabilityofsatisfyingthecondition}
implies
%shows
that
%It remains to derive inequality~(\ref{theconditiontobetested})
%inequalities~(\ref{theconditiontobetested})~and~(\ref{thenearoptimalisreal1})
inequalities~(\ref{theconditiontobetested})--(\ref{thenearoptimalisreal})
%to
hold simultaneously with probability $\Omega(1)$ in any iteration of {\sf Las Vegas median},
it happens with probability $\Omega(1)$ that
in the first iteration,
$z$ is returned in
line~7 (because of inequality~(\ref{theconditiontobetested}))
%of the first iteration
%of {\sf Las Vegas median}
%is run
and
%that
%while
%$z$
is $(1+\epsilon/10^{10})$-approximate (because of inequality~(\ref{thenearoptimalisreal1})).
So {\sf Las Vegas median}
outputs a $(1+\epsilon/10^{10})$-approximate
%solution
$1$-median
with probability
$\Omega(1)$ in the first iteration.
%As
%Furthermore, $\delta<\epsilon^2$ by line~1 of {\sf Las Vegas median}.
In summary, we have the following.

\begin{lemma}\label{MonteCarloaspectlemma}
The first iteration of the {\bf while} loop of
{\sf Las Vegas median} outputs
a $(1+\epsilon)$-approximate $1$-median
with probability
$\Omega(1)$.
\end{lemma}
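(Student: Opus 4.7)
The plan is to piggyback on the proof of Lemma~\ref{probabilityofsatisfyingthecondition} rather than redo any calculation. That earlier proof already proceeds by a union bound over two events: the Chebyshev event that inequality~(\ref{matchingislarge}) holds (probability at least $1-1/25$ over $\pi$) and the event that {\sf Indyk median} indeed returns a $(1+\epsilon/10^{10})$-approximate $1$-median, i.e., inequality~(\ref{thenearoptimalisreal1}) (probability at least $1-1/e$ by Fact~\ref{Indykfact}). Because these two events depend on disjoint sources of randomness ($\pi$ versus the internal coin tosses of {\sf Indyk median}), the union bound gives that both occur simultaneously with probability at least $1-1/25-1/e>1/2=\Omega(1)$.

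Conditioned on this joint event, I would argue that two things happen in the very first iteration. First, the deterministic derivation in Lemma~\ref{probabilityofsatisfyingthecondition} (which never uses anything other than (\ref{matchingislarge}) and (\ref{thenearoptimalisreal1})) delivers inequality~(\ref{theconditiontobetested}) for all sufficiently large $n$; this is exactly the test on line~6 of {\sf Las Vegas median}, so line~7 executes and $z$ is returned. Second, inequality~(\ref{thenearoptimalisreal1}) literally asserts that $z$ is a $(1+\epsilon/10^{10})$-approximate $1$-median, and since $\epsilon/10^{10}<\epsilon$, this $z$ is also $(1+\epsilon)$-approximate. Combining the two conclusions finishes the proof.

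There is no genuine obstacle beyond bookkeeping. Lemma~\ref{probabilityofsatisfyingthecondition} as written only records the marginal event that the loop test succeeds, but inspecting its proof shows that the succeeding event is a subset of the event where (\ref{thenearoptimalisreal1}) holds. The small subtlety worth mentioning explicitly is that the termination guarantee and the approximation guarantee come from the same joint event, so one does not have to pay a separate probability for each.
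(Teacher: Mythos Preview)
Your proposal is correct and matches the paper's own argument essentially line for line. The paper's justification (given in the paragraph immediately preceding the lemma rather than as a separate proof environment) likewise notes that the proof of Lemma~\ref{probabilityofsatisfyingthecondition} already shows inequalities~(\ref{theconditiontobetested}) and~(\ref{thenearoptimalisreal1}) hold simultaneously with probability at least $1-1/25-1/e>1/2$, whence $z$ is both returned in line~7 and $(1+\epsilon/10^{10})$-approximate in the first iteration.
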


\section{Putting things together}\label{mainresultsection}

We now show that
%Finally,
{\sc metric $1$-median} has a Las Vegas $(2+\epsilon)$-approximation algorithm
with an expected $O(n/\epsilon^2)$ running time for all constants $\epsilon>0$.
%, as stated below.
Our algorithm
%has the additional advantage of outputting
also outputs
a $(1+\epsilon)$-approximate
$1$-median in time $O(n/\epsilon^2)$ with probability $\Omega(1)$.

\begin{theorem}\label{maintheorem}
For
%all constants
each constant
$\epsilon>0$,
{\sc metric $1$-median} has a randomized algorithm that
(1)~{\em always} outputs a
$(2+\epsilon)$-approximate solution
%and
in an expected
%runs in
%takes
%time
$O(n/\epsilon^2)$
%in expectation.
time and that
(2)~outputs a $(1+\epsilon)$-approximate solution in time $O(n/\epsilon^2)$
with probability $\Omega(1)$.
\end{theorem}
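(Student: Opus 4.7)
The plan is to package the correctness lemma, the per-iteration success probability, and the one-shot Monte-Carlo guarantee into a single statement about {\sf Las Vegas median}. All the heavy lifting is already done by Lemmas~\ref{wehaveagoodsolutionattheend}, \ref{probabilityofsatisfyingthecondition}, and~\ref{MonteCarloaspectlemma}; what remains is to check the running time and glue the three facts together.

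For part~(1), I would first argue correctness: line~7 is the only return statement, and Lemma~\ref{wehaveagoodsolutionattheend} guarantees that whenever line~7 executes the returned $z$ is $(2+\epsilon)$-approximate. So every output of the algorithm satisfies the approximation guarantee, regardless of the coin tosses. Next I would bound the cost of a single iteration: line~3 invokes {\sf Indyk median} with accuracy parameter $\epsilon/10^{10}$, so by Fact~\ref{Indykfact} it costs $O(n/\epsilon^2)$ time; lines~4--6 perform $O(n)$ arithmetic operations and distance queries (sampling a uniform bijection on a set of size at most $n$ can be done in $O(n)$ time, and the two sums in line~6 together touch each point of $[n]$ exactly once). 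Hence each iteration takes $O(n/\epsilon^2)$ time.

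For the expected number of iterations, Lemma~\ref{probabilityofsatisfyingthecondition} shows that for all sufficiently large $n$, the condition tested in line~6 holds with probability greater than $1/2$ in any given iteration; since the randomness in distinct iterations is independent (each iteration draws a fresh $\pi$ and reruns {\sf Indyk median} with fresh coins), the number of iterations is stochastically dominated by a geometric random variable with success probability $1/2$, giving an expected $O(1)$ iterations. Multiplying with the per-iteration bound yields expected total time $O(n/\epsilon^2)$, establishing~(1). The small-$n$ boundary (when $n<1/\delta+4$, i.e., $n$ is a constant depending on $\epsilon$) is handled by the brute-force $\Theta(n^2)$ fallback already acknowledged after Lemma~\ref{thesmallradiusballislarge}, whose running time is still $O(n/\epsilon^2)$ for such $n$.

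For part~(2), Lemma~\ref{MonteCarloaspectlemma} states that the first iteration of the \textbf{while} loop outputs a $(1+\epsilon)$-approximate $1$-median with probability $\Omega(1)$; whenever this event occurs the algorithm halts after a single iteration and so has used only $O(n/\epsilon^2)$ time, proving~(2). The main (mild) obstacle is simply the bookkeeping: one needs to be careful that the ``with probability $\Omega(1)$'' in part~(2) refers to the joint event that inequality~(\ref{theconditiontobetested}) \emph{and} the Indyk success event~(\ref{thenearoptimalisreal1}) both hold in the first iteration, which is exactly what the proof of Lemma~\ref{probabilityofsatisfyingthecondition} establishes via the union bound, and that the independence across iterations legitimately reduces the expected-iterations count to a geometric tail.
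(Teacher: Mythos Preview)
Your proposal is correct and follows essentially the same approach as the paper: invoke Lemma~\ref{wehaveagoodsolutionattheend} for correctness, Fact~\ref{Indykfact} and straightforward bookkeeping for the $O(n/\epsilon^2)$ per-iteration cost, Lemma~\ref{probabilityofsatisfyingthecondition} for the $O(1)$ expected number of iterations, and Lemma~\ref{MonteCarloaspectlemma} for part~(2). The only detail the paper adds that you omit is a safety timeout---after $n^2$ steps it falls back to brute-force $1$-median selection---which guarantees termination on every execution (not just almost surely) without affecting the expected $O(n/\epsilon^2)$ running time.
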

\begin{proof}
By
Lemma~\ref{wehaveagoodsolutionattheend},
{\sf Las Vegas median}
outputs a
%is
%$2/((1-100\delta^{1/4}))$-approximate
$(2+\epsilon)$-approximate
%solution
$1$-median
at termination.
%for {\sc metric $1$-median}.
%Take $\delta>0$ so that $2/(1-100\delta^{1/4})=2+\epsilon$.
To prevent {\sf Las Vegas median}
from running forever,
find a $1$-median by brute force (which obviously takes $O(n^2)$ time)
after
%$n$ iterations.
$n^2$ steps of computation.

By Fact~\ref{Indykfact}, line~3
%Line~3
of
{\sf Las Vegas median}
%runs Indyk's $O(n/\delta^{O(1)})$-time algorithm.
takes $O(n/\epsilon^2)$ time.
%If $|B(z,\epsilon nr)|$ is even,
%then line~5
Line~5
%runs
takes
%in
time $O(|B(z,\delta nr)|)=O(n)$
%run
by
the Knuth shuffle.
%in line~4
%in time $O(|B(z,\delta nr)|)=O(n)$.
%Furthermore, line~4
%of {\sc Las Vegas approx.}\
%takes time $O(|B(z,\delta nr)|)=O(n)$ by the Knuth shuffle.
%Otherwise, it picks $x$ uniformly at random from $B(z,\epsilon nr)$
%and then
%, by the Knuth shuffle,
%runs the Knuth shuffle to
%picks
%a uniformly random bijection
%from $[2\lfloor|B(z,\epsilon nr)|/2\rfloor]$ to $B(z,\epsilon nr)\setminus\{x\}$
%by the Knuth shuffle.
%In either case, line~5 takes time $O(|B(z,\delta nr)|)=O(n)$.
%each
Clearly, the other lines also take $O(n)$ time.
Consequently, each
%Each
iteration of the {\bf while} loop
of {\sf Las Vegas median}
takes $O(n/\epsilon^2)$ time.
%because {\sc Indyk approx.}\
%runs in $O(n)$ time.
%and
%(2)~the Knuth shuffle
%in line~4 of {\sc Las Vegas approx.}\
%takes linear time.
%produces a uniformly random permutation on any finite set in linear time.
%runs for $O(1)$ iterations in expectation
By Lemma~\ref{probabilityofsatisfyingthecondition} and lines~6--7,
{\sf Las Vegas median}
runs for at most $1/\Omega(1)=O(1)$ iterations in expectation.
So
%it has an
its
expected running time
%of
is
$O(1)\cdot O(n/\epsilon^2)=O(n/\epsilon^2)$.
%By line~1, $\delta=\Theta(\epsilon^2)$ and thus $O(n/\delta^2)=O(n/\epsilon^4)$.

Having shown
%Finally, invoke Lemma~\ref{MonteCarloaspectlemma} and note that we have shown
each iteration
of {\sf Las Vegas median}
to take $O(n/\epsilon^2)$ time,
establish condition~(2) of the theorem with
%invoke
Lemma~\ref{MonteCarloaspectlemma}.
%implies the rest of the theorem.
\end{proof}

%Indyk's algorithm in
By
Fact~\ref{Indykfact},
{\sf Indyk median} satisfies condition~(2)
%but not condition~(1)
in Theorem~\ref{maintheorem}.
%but does not satisfy condition~(1).
But it does not satisfy condition~(1).

We briefly
%discuss
justify
the optimality of
the ratio of $2+\epsilon$ in
Theorem~\ref{maintheorem}.
%Assume that {\sc metric $1$-median} has a randomized algorithm
Let
$A$ be a randomized algorithm
that
always
%outputting
outputs
a
$(2-\epsilon)$-approximate
%solution.
$1$-median.
%for a small constant $\epsilon>0$.
Furthermore, denote by $p\in [n]$ (resp., $Q\subseteq [n]\times [n]$)
the output (resp., the set of queries as unordered pairs)
of $A^{d_1}(n)$, where $d_1$ is the discrete metric (i.e.,
$d_1(x,y)=1$ and $d_1(x,x)=0$ for all distinct $x$, $y\in [n]$).
Without loss of generality, assume $(p,y)\in Q$ for all $y\in [n]\setminus\{p\}$ by adding dummy
queries.
%In particular,
So
$A$ knows
%the queries of $A$ give
that
\begin{eqnarray}
\sum_{y\in [n]}\,d_1\left(p,y\right)=n-1.
\label{outputunderthediscretemetric}
\end{eqnarray}
Furthermore, assume that $A$ never queries for the distance from a point to itself.
%Suppose
%Assume
%for contradiction

In the sequel,
%assume
consider the case
%the existence of an execution of $A^{d_1}(n)$
that $|Q|<\epsilon\cdot(n-1)^2/4$.
%with a positive probability.
%Then by
By
the averaging argument, there exists a point $\hat{p}\in [n]\setminus \{p\}$
involved in at most $2\cdot|Q|/(n-1)$ queries in $Q$.
Clearly, $A$ cannot exclude the possibility that $d_1(\hat{p},y)=1/2$ for all $y\in[n]\setminus\{\hat{p}\}$
satisfying $(\hat{p},y)\notin Q$.
%Equivalently,
In summary,
$A$ cannot rule out the case that
\begin{eqnarray}
%\sum_{y\in[n]}\,d_1\left(p,y\right)&=&n-1,\\
\sum_{y\in[n]}\,d_1\left(\hat{p},y\right)&\le& \frac{2\cdot |Q|}{n-1}\cdot 1
+\left(n-1-\frac{2\cdot |Q|}{n-1}\right)\cdot \frac{1}{2}
< \left(\frac{1}{2}+\frac{\epsilon}{4}\right)\cdot(n-1).\,\,\,\,\,
\label{acasethatcannotberuledout}
%\nonumber
\end{eqnarray}
%This and equation~(\ref{outputunderthediscretemetric})
%For all sufficiently small constant $\epsilon>0$,
Equations~(\ref{outputunderthediscretemetric})--(\ref{acasethatcannotberuledout})
contradict
%contradicting
the guarantee that $p$ is $(2-\epsilon)$-approximate.
%after tedious calculations.
In summary,
any randomized algorithm that always outputs a $(2-\epsilon)$-approximate
$1$-median must {\em always} make at least $\epsilon\cdot (n-1)^2/4=\Omega(\epsilon n^2)$ queries
given oracle access to the discrete metric.
%$d_1$.

\comment{ % the same result was given by Kfir Barhum, Oded Goldreich, and Adi Shraibman 20170129 23:16
\section{Estimating the sum of distances}

Using our results in Sec.~\ref{expectedtimesection},
%this section designs
we present
%we now design
a Monte-Carlo algorithm,
called {\sf sum distances} in Fig.~\ref{averagedistancealgorithm},
for estimating
$\sum_{u,v\in[n]}\,d(u,v)$.
%Our algorithm, called {\sf sum distances}, is in Fig.~\ref{averagedistancealgorithm}.
%Its estimation error is given below.
%The following lemma gives the

\begin{figure}
\begin{algorithmic}[1]
%\STATE Find $\delta>0$ such that $2+\epsilon=2/(1-100\sqrt{\delta})$;
\STATE $\tilde{\delta}\leftarrow \epsilon^2/10000$;
%\STATE Pick $\tilde{z}$ uniformly at random from $[n]$;
\STATE $\tilde{z}\leftarrow\text{\sf Indyk median}^d(n,\epsilon)$;
\STATE $\tilde{r}\leftarrow \sum_{x\in[n]}\,d(\tilde{z},x)/n$;
\STATE Pick a uniformly random bijection $\tilde{\pi}\colon [\,|B(\tilde{z},\tilde{\delta} n\tilde{r})|\,]
\to B(\tilde{z},\tilde{\delta} n\tilde{r})$;
\STATE $S\leftarrow (2\,|B(\tilde{z},\tilde{\delta} n\tilde{r})|^2/n)
%|n
\cdot
\sum_{i=1}^{\lfloor |B(\tilde{z},\tilde{\delta} n \tilde{r})|/2\rfloor}\,
d(\tilde{\pi}(2i-1),\tilde{\pi}(2i))$;
\STATE $T\leftarrow \sum_{\text{$u$, $v\in[n]$ s.t.\ $\{u,v\}\not\subseteq B(\tilde{z},\tilde{\delta} n\tilde{r})$}}\,
d(u,v)$;
\RETURN $S+T$;
\end{algorithmic}
\caption{Algorithm {\sf sum distances} with oracle access to a metric $d\colon [n]\times[n]
\to[\,0,\infty\,)$ and with inputs $n\in\mathbb{Z}^+$ and $\epsilon\in(0,1)$}
\label{averagedistancealgorithm}
\end{figure}

\begin{lemma}\label{generalformofestimationerror}
%Let $S$ and $T$ be as in lines~5--6 of {\sf sum distances}.
%Then
In line~7 of {\sf sum distances},
$$
S+T-\sum_{u,v\in[n]}\,d\left(u,v\right)
=
\left(
\frac{2\,|B(\tilde{z},\tilde{\delta} n\tilde{r})|^2}{n}
\cdot \sum_{i=1}^{\lfloor |B(\tilde{z},\tilde{\delta} n \tilde{r})|/2\rfloor}\,
d\left(\tilde{\pi}\left(2i-1\right),\tilde{\pi}\left(2i\right)\right)
\right)
-\sum_{u,v\in B(\tilde{z},\tilde{\delta}n\tilde{r})}\,d\left(u,v\right).
$$
\end{lemma}
\begin{proof}
We have
\begin{eqnarray*}
\sum_{u,v\in[n]}\,d\left(u,v\right)
&=&\sum_{u,v\in B(\tilde{z},\tilde{\delta}n\tilde{r})}\,d\left(u,v\right)
+\sum_{\text{$u$, $v\in[n]$ s.t.\ $\{u,v\}\not\subseteq B(\tilde{z},\tilde{\delta} n\tilde{r})$}}\,
d\left(u,v\right)\\
&=&\left(\sum_{u,v\in B(\tilde{z},\tilde{\delta}n\tilde{r})}\,d\left(u,v\right)\right)
+T,
\end{eqnarray*}
where the last equality follows from line~6 of {\sf sum distances}.
Now use line~5.
\end{proof}

Similarly to equation~(\ref{smallerballaveragedefinition}),
define
\begin{eqnarray}
\tilde{r}'
\equiv
\frac{1}{|B(\tilde{z},\tilde{\delta}n\tilde{r})|^2}
\cdot \sum_{u,v\in B(\tilde{z},\tilde{\delta}n\tilde{r})}\,d\left(u,v\right).
\label{averagedistanceinthenewball}
\end{eqnarray}
%\begin{proof}
Observe that the
%The
proofs of
Lemmas~\ref{thesmallradiusballislarge}--\ref{randommatchingconcentrationlemma}
use only the following facts:
%are proved 
%\begin{itemize}
%\item $r=\sum_{}$
%\end{itemize}
%only use
\begin{enumerate}[(i)]
%\begin{eqnarray}
\item\label{originalcondition1}
$$r
%&=&
=
\frac{1}{n}\cdot \sum_{x\in[n]}\,d\left(z,x\right).$$
\item
$$r'
%&=&
=
\frac{1}{|B(z,\delta nr)|^2}\cdot \sum_{u,v\in B(z,\delta nr)}\,d\left(u,v\right).$$
%\end{eqnarray}
%and
\item\label{originalcondition3}
$\pi\colon [\,|B(z,\delta nr)|\,]\to B(z,\delta nr)$
%being
is
a uniformly random bijection.
\end{enumerate}
In particular, they
%hold for {\em all}
do not
%depend
rely
on the choices of
$z\in[n]$ and $\delta>0$.
%hold no matter how $z\in[n]$ and $\delta>0$ are chosen.
By
%inspecting
lines~3--4 of {\sf sum distances}
and equation~(\ref{averagedistanceinthenewball}), conditions~(\ref{originalcondition1})--(\ref{originalcondition3})
hold with $z$, $r$, $r'$, $\delta$ and $\pi$
replaced by $\tilde{z}$, $\tilde{r}$, $\tilde{r}'$, $\tilde{\delta}$ and $\tilde{\pi}$, respectively.
%\end{proof}
Therefore,
Lemmas~\ref{thesmallradiusballislarge}--\ref{randommatchingconcentrationlemma}
%are all
remain
true with
$z$, $r$, $r'$, $\delta$ and $\pi$
replaced by $\tilde{z}$, $\tilde{r}$, $\tilde{r}'$, $\tilde{\delta}$ and $\tilde{\pi}$, respectively.
So we have the following analogies to
Lemmas~\ref{thesmallradiusballislarge},~\ref{inneraverageandoverallaverage}~and~\ref{randommatchingconcentrationlemma}.
%respectively.

%The following is analogous
%Below is the analogy to Lemma~\ref{thesmallradiusballislarge}.

\begin{lemma}\label{thesmallradiusballislargeanalogy}
$$
\left|\,
[n]\setminus B\left(\tilde{z},\tilde{\delta}n\tilde{r}\right)\,\right|\le\frac{1}{\tilde{\delta}}
$$
and, therefore,
$$
\left|\,B\left(\tilde{z},\tilde{\delta}n\tilde{r}\right)\,\right|\ge n-\frac{1}{\tilde{\delta}}
=\left(1-o(1)\right)n.
$$
\end{lemma}

\comment{ % do not prove it explicitly 20170127 19:39
\begin{proof}
Clearly,
$$
\sum_{x\in [n]}\,d\left(\tilde{z},x\right)
\ge \sum_{x\in [n]\setminus B(\tilde{z},\tilde{\delta} n\tilde{r})}\,d\left(\tilde{z},x\right)
\ge \sum_{x\in [n]\setminus B(\tilde{z},\tilde{\delta} n\tilde{r})}\,\tilde{\delta} n\tilde{r}
=\left|\,[n]\setminus B\left(\tilde{z},\tilde{\delta} n\tilde{r}\right)\,\right|\cdot \tilde{\delta} n\tilde{r}.
$$
%contradicting
%This
Then use
%Use
line~3 of {\sf sum distances}.
\end{proof}
}% do not prove it explicitly 20170127 19:39

%The following is analogous
%Below is the analogy to Lemma~\ref{inneraverageandoverallaverage}.

\begin{lemma}\label{inneraverageandoverallaveragealternative}
%$$r'\leq 2\left(1+o(1)\right)r.$$
$\tilde{r}'\leq 2\tilde{r}$.
\end{lemma}

\comment{ % do not prove it this way 20170127 19:36
\begin{proof}
By
equation~(\ref{averagedistanceinthenewball}) and
the triangle inequality,
\begin{eqnarray}
\tilde{r}'
&\le&
\frac{1}{|B(\tilde{z},\tilde{\delta} n\tilde{r})|^2}
\cdot
\sum_{u, v\in B(\tilde{z},\tilde{\delta} n\tilde{r})}\,
\left(d\left(\tilde{z},u\right)+d\left(\tilde{z},v\right)\right)
\label{frominnerdistancetowholedistancenew}\\
&=&
\frac{1}{|B(\tilde{z},\tilde{\delta} n\tilde{r})|^2}
\cdot
\left|B(\tilde{z},\tilde{\delta} n\tilde{r})\right|\cdot\left(
\sum_{u\in B(\tilde{z},\tilde{\delta} n\tilde{r})}\,
d\left(\tilde{z},u\right)
+\sum_{v\in B(\tilde{z},\tilde{\delta} n\tilde{r})}\, d\left(\tilde{z},v\right)
\right)\nonumber\\
&=&
\frac{2}{|B(\tilde{z},\tilde{\delta} n\tilde{r})|}
\cdot
\sum_{u\in B(\tilde{z},\tilde{\delta} n\tilde{r})}\,
d\left(\tilde{z},u\right).\nonumber
\end{eqnarray}
%where the second inequality follows from the triangle inequality.
Obviously,
the average distance from $\tilde{z}$ to the points in $B(\tilde{z},\tilde{\delta} n\tilde{r})$
%is less than or equal to
is at most
that from $\tilde{z}$ to all points,
i.e.,
%That is,
\begin{eqnarray}
\frac{1}{|B(\tilde{z},\tilde{\delta} n\tilde{r})|}
\cdot
\sum_{u\in B(\tilde{z},\tilde{\delta} n\tilde{r})}\, d\left(\tilde{z},u\right)
\le
\frac{1}{n}\cdot
\sum_{u\in [n]}\, d\left(\tilde{z},u\right).
\label{frominnerdistancetowholedistance2new}
\end{eqnarray}
Inequalities~(\ref{frominnerdistancetowholedistancenew})--(\ref{frominnerdistancetowholedistance2new})
and
line~3 of {\sf sum distances}
%and Lemma~\ref{thesmallradiusballislarge}
complete the proof.
\end{proof}
}% do not prove it this way 20170127 19:36

%The following is analogous
%Below is the analogy to Lemma~\ref{randommatchingconcentrationlemma}.

\begin{lemma}\label{randommatchingconcentrationlemmaanalogy}
For
%each constant
all
$k>1$,
$$
\Pr\left[\,
\left|\,
\left(
\sum_{i=1}^{\lfloor|B(\tilde{z},\tilde{\delta} n\tilde{r})|/2\rfloor}\,
d\left(\tilde{\pi}\left(2i-1\right),\tilde{\pi}\left(2i\right)\right)
\right)
-
\frac{1}{2}\cdot\left(1\pm o(1)
\right)n
\tilde{r}'
\,\right|
\ge
k\sqrt{2\left(1+o(1)
\right)\tilde{\delta}}\, n\tilde{r}
\,\right]
\le\frac{1}{k^2},
$$
where the probability is taken over $\tilde{\pi}$.
\end{lemma}

%For all $k>1$,
By
Lemma~\ref{randommatchingconcentrationlemmaanalogy},
%has the following corollary.
{\small % has to shorten the equation 20170127 20:57
$$
\Pr\left[\,
\left|\,
\left(
\sum_{i=1}^{\lfloor|B(\tilde{z},\tilde{\delta} n\tilde{r})|/2\rfloor}\,
d\left(\tilde{\pi}\left(2i-1\right),\tilde{\pi}\left(2i\right)\right)
\right)
-
\frac{1}{2}\cdot
%\left(1\pm o(1)\right)
n
\tilde{r}'
\,\right|
\ge
k\sqrt{2\left(1+o(1)
\right)\tilde{\delta}}\, n\tilde{r}
+\frac{1}{2}\cdot o(1)
n \tilde{r}'
\,\right]
\le\frac{1}{k^2}
$$
}% has to shorten the equation 20170127 20:57
for all $k>1$.\footnote{It is easy to verify that
%Observe that
$$\Pr\left[\,\left|X-a\right|\ge c+|b|\,\right]
\le\Pr\left[\,\left|X-\left(a+b\right)\right|\ge c\,\right]$$
for all $a$, $b$, $c\in\mathbb{R}$ and for each random variable $X$.
Then take $X=\sum_{i=1}^{\lfloor|B(\tilde{z},\tilde{\delta}n\tilde{r})|/2\rfloor}\,
d(\tilde{\pi}(2i-1),\tilde{\pi}(2i))$, $a=(1/2)\cdot n\tilde{r}'$,
$b=\pm (1/2)o(1)n\tilde{r}'$
(as within $\Pr[\cdot]$ in Lemma~\ref{randommatchingconcentrationlemmaanalogy})
and
$c=k\sqrt{2(1+o(1))\tilde{\delta}}\,n\tilde{r}$.}
%Equivalently,
This is equivalent to
{\small % has to shorten the equation 20170127 20:58
\begin{eqnarray}
&&
\Pr\left[\,
\left|\,
\left(
\frac{2\,|B(\tilde{z},\tilde{\delta}n\tilde{r})|^2}{n}
\cdot\sum_{i=1}^{\lfloor|B(\tilde{z},\tilde{\delta} n\tilde{r})|/2\rfloor}\,
d\left(\tilde{\pi}\left(2i-1\right),\tilde{\pi}\left(2i\right)\right)
\right)
-
%\frac{1}{2}\cdot
%\left(1\pm o(1)\right)
%n
%\tilde{r}'
\sum_{u,v\in B(\tilde{z},\tilde{\delta} n\tilde{r})}\,d\left(u,v\right)
\,\right|\right.\nonumber\\
&&
\left.
\phantom{\left|\left(\sum_{i=1}^{\lfloor|B(\tilde{z},\tilde{\delta} n\tilde{r})|/2\rfloor}\,\right)\right|}%to adjust heights
\ge
\frac{2\,|B(\tilde{z},\tilde{\delta}n\tilde{r})|^2}{n}\cdot
\left(
k\sqrt{2\left(1+o(1)
\right)\tilde{\delta}}\, n\tilde{r}
+\frac{1}{2}\cdot o(1)
n \tilde{r}'
\right)
\,\right]\nonumber\\
&\le&\frac{1}{k^2}\label{theestimationerrorforthetotaldistance}
\end{eqnarray}
}% has to shorten the equation 20170127 20:58
by equation~(\ref{averagedistanceinthenewball}), for all $k>1$.
%\footnote{Multiply each term within $\Pr[\cdot]$ by $2\,|B(\tilde{z},\tilde{\delta} n\tilde{r})|^2/n$.}

%By
Lemma~\ref{generalformofestimationerror} and
%Taking $k=5$ in
inequality~(\ref{theestimationerrorforthetotaldistance})
%and by
with $k=5$
imply the following.

\begin{lemma}\label{theerrorandprobabilitycomplicatedform}
{\small % need to shorten it 20170127 21:37
$$\Pr\left[\,
\left|\,
S+T-\sum_{u,v\in[n]}\,d\left(u,v\right)
\,\right|
\ge
\frac{2\,|B(\tilde{z},\tilde{\delta}n\tilde{r})|^2}{n}\cdot
\left(
5\sqrt{2\left(1+o(1)
\right)\tilde{\delta}}\, n\tilde{r}
+\frac{1}{2}\cdot o(1)
n \tilde{r}'
\right)
\,\right]
\le\frac{1}{25}.$$
}% need to shorten it 20170127 21:37
\end{lemma}

\begin{lemma}\label{theerrortermasymptotics}
For all sufficiently large $n$,
{\small % make it short 20170127 21:55
$$
\Pr\left[\,
\frac{2\,|B(\tilde{z},\tilde{\delta}n\tilde{r})|^2}{n}\cdot
\left(
5\sqrt{2\left(1+o(1)
\right)\tilde{\delta}}\, n\tilde{r}
+\frac{1}{2}\cdot o(1)
n \tilde{r}'
\right)
\le 100\sqrt{\tilde{\delta}}\cdot \sum_{u,v\in[n]}\,d\left(u,v\right)
\,\right]\ge1-\frac{1}{e}.
$$
}% make it short 20170127 21:55
\end{lemma}
\begin{proof}
By Lemmas~\ref{thesmallradiusballislargeanalogy}--\ref{inneraverageandoverallaveragealternative},
%\begin{lemma}
%{\small % gotta shorten it 20170127 21:54
\begin{eqnarray}
&&\frac{2\,|B(\tilde{z},\tilde{\delta}n\tilde{r})|^2}{n}\cdot
\left(
5\sqrt{2\left(1+o(1)
\right)\tilde{\delta}}\, n\tilde{r}
+\frac{1}{2}\cdot o(1)
n \tilde{r}'
\right)\nonumber\\
&\le&
2\left(1-o(1)\right)
\left(5\sqrt{2\left(1+o(1)\right)\tilde{\delta}}+o(1)
\right)n^2\tilde{r}.\label{aquickestimation}
\end{eqnarray}
%}% gotta shorten it 20170127 21:54
By Fact~\ref{Indykfact} and line~2 of {\sf sum distances},
$$
\Pr\left[
\sum_{x\in[n]}\,d\left(\tilde{z},x\right)
\le\left(1+\epsilon\right)\cdot\min_{y\in[n]}\,\sum_{x\in[n]}\,d\left(y,x\right)
\right]\ge
1-\frac{1}{e}.
$$
Equivalently,
\begin{eqnarray}
\Pr\left[
n\tilde{r}
\le\left(1+\epsilon\right)\cdot\min_{y\in[n]}\,\sum_{x\in[n]}\,d\left(y,x\right)
\right]\ge
1-\frac{1}{e}\label{theindykresultisprobablygood}
\end{eqnarray}
by line~3.

By the averaging argument,
$$
\min_{y\in[n]}\,\sum_{x\in[n]}\,d\left(y,x\right)
\le \frac{1}{n}\cdot\sum_{y\in[n]}\,\sum_{x\in[n]}\,d\left(y,x\right).
$$
This and inequality~(\ref{theindykresultisprobablygood}) imply
\begin{eqnarray}
\Pr\left[
n^2\tilde{r}
\le\left(1+\epsilon\right)\cdot\sum_{y\in[n]}\,\sum_{x\in[n]}\,d\left(y,x\right)
\right]\ge
1-\frac{1}{e}\label{theindykresultisprobablygoodcomparedtotheaverage}
\end{eqnarray}
Inequalities~(\ref{aquickestimation})~and~(\ref{theindykresultisprobablygoodcomparedtotheaverage})
complete the proof.\footnote{Note that the right-hand side of
inequality~(\ref{aquickestimation}) is at most $(100\sqrt{\tilde{\delta}}/(1+\epsilon))\cdot
n^2\tilde{r}$ for a small $\epsilon>0$ and all sufficiently large $n$.
Also note that
$\sum_{y\in[n]}\,\sum_{x\in[n]}\,d(x,y)=\sum_{u,v\in[n]}\,d(u,v)$.}
\end{proof}
%\end{lemma}

We now show an efficient estimation of $\sum_{u,v\in[n]}\,d(u,v)$.

\begin{theorem}\label{theoremonestimationofsumofdistances}
Given $n\in\mathbb{Z}^+$, a constant $\epsilon>0$ and oracle access to a metric $d\colon[n]\times[n]\to[\,0,\infty\,)$,
$\sum_{u,v\in[n]}\,d(u,v)$ can be estimated to within an additive error of
$\epsilon\cdot\sum_{u,v\in[n]}\,d(u,v)$
in $O(n/\epsilon^2)$ time
and
with an $\Omega(1)$ probability of success.
\end{theorem}
\begin{proof}
By Lemmas~\ref{theerrorandprobabilitycomplicatedform}--\ref{theerrortermasymptotics},
$$
\Pr\left[\,
\left|\,
S+T-\sum_{u,v\in[n]}\,d\left(u,v\right)
\,\right|\le 100\sqrt{\tilde{\delta}}\cdot \sum_{u,v\in[n]}\,d\left(u,v\right)
\,\right]
\ge 1-\frac{1}{25}-\frac{1}{e}=\Omega(1).
$$
%By line~1 of {\sf sum distances}, $100\sqrt{\tilde{\delta}}=\epsilon$.
%This and
So by
line~1 of {\sf sum distances},
%show
%the output of
%{\sf sum distances}
line~7
%to deviate from $\sum_{u,v\in[n]}\,d(u,v)$
estimates
%by an additive factor of at most $\epsilon\cdot \sum_{u,v\in[n]}\,d(u,v)$
%that
$\sum_{u,v\in[n]}\,d(u,v)$
%is estimated
to within
an additive error of
$\epsilon\cdot\sum_{u,v\in[n]}\,d(u,v)$ with probability $\Omega(1)$.

By Fact~\ref{Indykfact}, line~2 of {\sf sum distances} takes $O(n/\epsilon^2)$ time.
Line~4 takes time $O(|B(\tilde{z},\tilde{\delta} n\tilde{r})|)=O(n)$ by the Knuth shuffle.
Because lines~6 queries for all the distances incident to
any point in $[n]\setminus B(\tilde{z},\tilde{\delta} n\tilde{r})$,
%By Lemma~\ref{thesmallradiusballislargeanalogy},
%line~6
it
takes time
$$O\left(\left|\,[n]\setminus B\left(\tilde{z},\tilde{\delta} n\tilde{r}\right)\,\right|
\cdot n\right)
\stackrel{\text{Lemma~\ref{thesmallradiusballislargeanalogy}}}{=}O\left(\frac{n}{\tilde{\delta}}\right).$$
%because it queries for all distances incident to a point in $n\setminus B(\tilde{z},\tilde{\delta} n\tilde{r})$.
By line~1, $\tilde{\delta}=\Theta(\epsilon^2)$.
The other lines of {\sf sum distances} clearly take $O(n)$ time.
\end{proof}

Prior to this paper,
the
%most efficient
best
Monte-Carlo algorithm
for estimating $\sum_{u,v\in[n]}\,d(u,v)$
to within an additive error of
$\epsilon\cdot \sum_{u,v\in[n]}\,d(u,v)$
takes time
$O(n/\epsilon^{7/2})$ when the probability of success is set to $\Omega(1)$~\cite{Ind99}.
%So our
%Our
So
%estimation of the sum of distances
%algorithm in
Theorem~\ref{theoremonestimationofsumofdistances}
implies an algorithm with a better running time
%is
%more efficient
%better
in terms of
%the dependency of the running time
$\epsilon$.
%, other things being equal.
}% the same result was given by Kfir Barhum, Oded Goldreich, and Adi Shraibman 20170129 23:16

\appendix
\section{Analyzing {\sc max square sum}}\label{analyzingthemaximizationproblem}

\setcounter{theorem}{0}
\numberwithin{theorem}{section}
%\counterwithin{theorem}{section}

{\sc Max square sum} has an optimal solution, denoted
$\{\tilde{d}_{u,v}\in\mathbb{R}\}_{u,v\in B(z,\delta nr)}$,
because
%Because
%the
its
%set of
feasible solutions
%to
%{\sc max square sum}
(i.e., those satisfying
constraints~(\ref{averagedistanceconstraint})--(\ref{largestdistanceconstraint}))
form a
%is
%compact
closed and bounded
%set.
%set in
subset of
%a finite-dimensional Euclidean space.
$\mathbb{R}^{(|B(z,\delta nr)|^2)}$.
(Recall from elementary mathematical analysis that a continuous
real-valued function on a
%compact
closed and bounded
%domain
subset of $\mathbb{R}^k$
%in a finite-dimensional Euclidean space
has a maximum value, where $k<\infty$.)
Note that
$\{\tilde{d}_{u,v}\in\mathbb{R}\}_{u,v\in B(z,\delta nr)}$
must be feasible to {\sc max square sum}.
Below is a consequence of constraint~(\ref{averagedistanceconstraint}).
%and the feasibility of $\{d^*_{u,v}\}_{u,v\in B(z,\delta nr)}$.

\begin{lemma}\label{maximumnumberoflargestvaluevariables}
\begin{eqnarray}
\left|
\left\{
\left(u,v\right)\in
%B\left(z,\delta nr\right)\times B\left(z,\delta nr\right)
B^2\left(z,\delta nr\right)
\mid \tilde{d}_{u,v}=2\delta nr
\right\}
\right|
\le \left\lfloor\frac{|B(z,\delta nr)|^2r'}{2\delta nr}\right\rfloor.
\label{maximumnumberoflargestvaluevariablesinequality}
\end{eqnarray}
\end{lemma}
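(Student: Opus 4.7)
The plan is to bound the count directly from constraint~(\ref{averagedistanceconstraint}) together with the non-negativity half of constraint~(\ref{largestdistanceconstraint}); this is a pure averaging/counting argument, with no analytic or probabilistic content.

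First, let $N$ denote the left-hand side of inequality~(\ref{maximumnumberoflargestvaluevariablesinequality}), i.e., the number of pairs $(u,v)\in B^2(z,\delta nr)$ with $\tilde{d}_{u,v}=2\delta nr$. Rewriting constraint~(\ref{averagedistanceconstraint}) for the feasible optimal solution gives
\begin{eqnarray*}
\sum_{u,v\in B(z,\delta nr)}\,\tilde{d}_{u,v}=|B(z,\delta nr)|^2\cdot r'.
\end{eqnarray*}

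Next, I would split the sum on the left into the $N$ terms whose value is $2\delta nr$ and the remaining terms. By the lower bound $\tilde{d}_{u,v}\ge 0$ from constraint~(\ref{largestdistanceconstraint}), the remaining terms contribute at least $0$, and hence
\begin{eqnarray*}
N\cdot 2\delta nr\le\sum_{u,v\in B(z,\delta nr)}\,\tilde{d}_{u,v}=|B(z,\delta nr)|^2\cdot r'.
\end{eqnarray*}
Dividing by $2\delta nr$ yields $N\le |B(z,\delta nr)|^2 r'/(2\delta nr)$. Because $N$ is a non-negative integer, I can replace the right-hand side by its floor, giving inequality~(\ref{maximumnumberoflargestvaluevariablesinequality}).

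There is no real obstacle here; the only tiny subtlety is the implicit assumption $2\delta nr>0$ needed to divide, which is immediate for a nondegenerate instance (if $r=0$ the lemma is vacuous since $B(z,\delta nr)=\emptyset$ or all distances are $0$). The proof will therefore be just a few lines, using only the two constraints of {\sc max square sum} evaluated at the optimal solution $\{\tilde{d}_{u,v}\}$.
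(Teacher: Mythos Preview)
Your proposal is correct and follows essentially the same argument as the paper: both use constraint~(\ref{averagedistanceconstraint}) (together with the non-negativity half of constraint~(\ref{largestdistanceconstraint})) to get $N\cdot 2\delta nr\le |B(z,\delta nr)|^2 r'$, then apply integrality of $N$ to insert the floor. The paper presents this in a single chained display without separately naming $N$, but the logic is identical.
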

\begin{proof}
Clearly,
$$
\left|B(z,\delta nr)\right|^2 r'
\stackrel{\text{(\ref{averagedistanceconstraint})}}{=}
\sum_{u,v\in B(z,\delta nr)}\,\tilde{d}_{u,v}
\ge
\left|
\left\{
\left(u,v\right)\in
%B\left(z,\delta nr\right)\times B\left(z,\delta nr\right)
B^2\left(z,\delta nr\right)
\mid \tilde{d}_{u,v}=2\delta nr
\right\}
\right|
\cdot 2\delta nr.
$$
Furthermore, the left-hand side of
inequality~(\ref{maximumnumberoflargestvaluevariablesinequality})
is an integer.
\end{proof}

\begin{lemma}\label{supportofoptimalsolution}
$$
\left|
\left\{
\left(u,v\right)\in
%B\left(z,\delta nr\right)\times B\left(z,\delta nr\right)
B^2\left(z,\delta nr\right)
\mid \tilde{d}_{u,v}>0
\right\}
\right|
\le \left\lfloor\frac{|B(z,\delta nr)|^2r'}{2\delta nr}\right\rfloor+1.
$$
\end{lemma}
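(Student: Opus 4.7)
The plan is to reduce the bound to Lemma~\ref{maximumnumberoflargestvaluevariables} by a standard exchange/perturbation argument, showing that in any optimal solution to {\sc max square sum}, at most one of the variables $\tilde{d}_{u,v}$ can lie strictly in the open interval $(0,2\delta nr)$. Once that is established, the positive entries split into two groups: those equal to the upper bound $2\delta nr$ and those strictly inside $(0,2\delta nr)$, whose cardinalities are bounded by Lemma~\ref{maximumnumberoflargestvaluevariables} and by $1$, respectively. Summing the two contributions yields the target bound.

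The heart of the argument is the exchange step. I would suppose for contradiction that two distinct pairs $(u_1,v_1),(u_2,v_2)\in B^2(z,\delta nr)$ satisfy $\tilde{d}_{u_1,v_1},\tilde{d}_{u_2,v_2}\in(0,2\delta nr)$, and consider the one-parameter perturbation that replaces $\tilde{d}_{u_1,v_1}$ by $\tilde{d}_{u_1,v_1}+t$ and $\tilde{d}_{u_2,v_2}$ by $\tilde{d}_{u_2,v_2}-t$, leaving all other variables unchanged. This preserves the linear constraint~(\ref{averagedistanceconstraint}), and since both original values are strictly interior to $[0,2\delta nr]$, the box constraints~(\ref{largestdistanceconstraint}) remain satisfied for all sufficiently small $|t|$. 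The change in objective~(\ref{objectiveofoptimization}) is proportional to $2t(\tilde{d}_{u_1,v_1}-\tilde{d}_{u_2,v_2})+2t^2$: when the two values differ I can pick the sign of $t$ to make the linear term positive, and when they coincide any nonzero $t$ gives a strict increase via the $t^2$ term. Either way the objective strictly increases, contradicting optimality of $\{\tilde{d}_{u,v}\}$.

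Combining the exchange conclusion with Lemma~\ref{maximumnumberoflargestvaluevariables} then finishes the proof, since the set of pairs with $\tilde{d}_{u,v}>0$ decomposes into those with $\tilde{d}_{u,v}=2\delta nr$ (at most $\lfloor |B(z,\delta nr)|^2 r'/(2\delta nr)\rfloor$ of them by Lemma~\ref{maximumnumberoflargestvaluevariables}) and those with $\tilde{d}_{u,v}\in(0,2\delta nr)$ (at most one by the exchange argument). The only---quite minor---obstacle is making sure the perturbation remains feasible, which is handled by choosing $|t|$ smaller than the distance from each of the two interior values to both endpoints of $[0,2\delta nr]$; the ``$+1$'' in the claimed bound exactly absorbs the single interior value that the exchange argument cannot rule out.
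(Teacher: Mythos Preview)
Your proposal is correct and follows essentially the same approach as the paper: both use the identical perturbation/exchange argument (shift mass $t$ from one interior value to another, preserving constraint~(\ref{averagedistanceconstraint}) and strictly increasing objective~(\ref{objectiveofoptimization}) via the $2t(\tilde d_{u_1,v_1}-\tilde d_{u_2,v_2})+2t^2$ term) to show at most one $\tilde d_{u,v}$ lies in $(0,2\delta nr)$, and then combine with Lemma~\ref{maximumnumberoflargestvaluevariables}. The only cosmetic difference is that the paper packages the contradiction by first assuming the support has size at least $\lfloor |B(z,\delta nr)|^2 r'/(2\delta nr)\rfloor+2$ and then deducing two interior values, whereas you establish ``at most one interior value'' first and add the counts afterward.
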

\begin{proof}
Assume otherwise.
Then
%by Lemma~\ref{maximumnumberoflargestvaluevariables},
\begin{eqnarray*}
&&\left|
\left\{
\left(u,v\right)\in
%B\left(z,\delta nr\right)\times B\left(z,\delta nr\right)
B^2\left(z,\delta nr\right)
\mid \left(\tilde{d}_{u,v}>0\right)\land\left(\tilde{d}_{u,v}\neq 2\delta nr\right)
\right\}
\right|\\
&\ge&
\left|
\left\{
\left(u,v\right)\in
%B\left(z,\delta nr\right)\times B\left(z,\delta nr\right)
B^2\left(z,\delta nr\right)
\mid \tilde{d}_{u,v}>0
\right\}
\right|
-\left|
\left\{
\left(u,v\right)\in
%B\left(z,\delta nr\right)\times B\left(z,\delta nr\right)
B^2\left(z,\delta nr\right)
\mid \tilde{d}_{u,v}=2\delta nr
\right\}
\right|\\
&\ge&
\left\lfloor\frac{|B(z,\delta nr)|^2r'}{2\delta nr}\right\rfloor+2
-\left|
\left\{
\left(u,v\right)\in
%B\left(z,\delta nr\right)\times B\left(z,\delta nr\right)
B^2\left(z,\delta nr\right)
\mid \tilde{d}_{u,v}=2\delta nr
\right\}
\right|\\
&\stackrel{\text{Lemma~\ref{maximumnumberoflargestvaluevariables}}}{\ge}&
%\ge
2.
\end{eqnarray*}
%$$
%This and
So by
constraint~(\ref{largestdistanceconstraint}) (and the feasibility of
$\{\tilde{d}_{u,v}\}_{u,v\in B(z,\delta nr)}$ to {\sc max square sum}),
%imply
$$
\left|
\left\{
\left(u,v\right)\in
%B\left(z,\delta nr\right)\times B\left(z,\delta nr\right)
B^2\left(z,\delta nr\right)
\mid 0<\tilde{d}_{u,v}<2\delta nr
\right\}
\right|
\ge 2.
$$
%So
Consequently,
there exist distinct $(x,y)$,
%$(x',y')\in B(z,\delta nr)\times B(z,\delta nr)$
$(x',y')\in B^2(z,\delta nr)$
satisfying
%$0<\tilde{d}_{x,y}<2\delta nr$ and $0<\tilde{d}_{x',y'}<2\delta nr$.
\begin{eqnarray}
0<\tilde{d}_{x,y},\, \tilde{d}_{x',y'}<2\delta nr.\label{thenonfullvariable1}
%\begin{array}{ccccc}
%0&<&\tilde{d}_{x,y}&<&2\delta nr,\label{thenonfullvariable1}\\
%0&<&\tilde{d}_{x',y'}&<&2\delta nr.\label{thenonfullvariable2}
%\end{array}
\end{eqnarray}
%\begin{eqnarray}
%d^*_{u,v}
%\end{eqnarray}
By symmetry, assume
%that
$\tilde{d}_{x,y}\ge \tilde{d}_{x',y'}$.
%Because $0<\tilde{d}_{x,y}<2\delta nr$ and $0<\tilde{d}_{x',y'}<2\delta nr$,
By
inequality~(\ref{thenonfullvariable1}),
%inequalities~(\ref{thenonfullvariable1})--(\ref{thenonfullvariable2}),
there exists a small real number $\beta>0$
such that
increasing $\tilde{d}_{x,y}$ by $\beta$ and simultaneously
decreasing $\tilde{d}_{x',y'}$ by $\beta$
%replacing $d^*_{u,v}$ and $d^*_{u',v'}$ by
%$d^*_{u,v}+\beta$ and $d^*_{u',v'}-\beta$, respectively,
will preserve
%preserves
constraints~(\ref{averagedistanceconstraint})--(\ref{largestdistanceconstraint}).
I.e., the solution $\{\hat{d}_{u,v}\in\mathbb{R}\}_{u,v\in B(z,\delta nr)}$ defined below is
%a
feasible
%solution
%for
to
{\sc max square sum}:
\begin{eqnarray}
\hat{d}_{u,v}=
\left\{
\begin{array}{ll}
\tilde{d}_{x,y}+\beta, & \text{if $(u,v)=(x,y)$},\\
\tilde{d}_{x',y'}-\beta, & \text{if $(u,v)=(x',y')$},\\
\tilde{d}_{u,v},&\text{otherwise}.
\end{array}
\right.\label{variatedsolution}
\end{eqnarray}

Clearly,
%By equation~(\ref{variatedsolution}),
objective~(\ref{objectiveofoptimization})
w.r.t.\ $\{\hat{d}_{u,v}\}_{u,v\in B(z,\delta nr)}$
exceeds that w.r.t.\
$\{\tilde{d}_{u,v}\}_{u,v\in B(z,\delta nr)}$
by
\begin{eqnarray*}
&&\left\lfloor\frac{|B(z,\delta nr)|}{2}\right\rfloor\cdot
\frac{1}{|B(z,\delta nr)|\cdot(|B(z,\delta nr)|-1)}
%{\hat{d}}^2_{x,y}+{\hat{d}}^2_{x',y'}-{\tilde{d}}^2_{x,y}-{\tilde{d}}^2_{x',y'}
\cdot
\sum_{u,v\in B(z,\delta nr)}\,\left({\hat{d}}^2_{u,v}-{\tilde{d}}^2_{u,v}
\right)\\
&\stackrel{\text{(\ref{variatedsolution})}}{=}&
\left\lfloor\frac{|B(z,\delta nr)|}{2}\right\rfloor\cdot
\frac{1}{|B(z,\delta nr)|\cdot(|B(z,\delta nr)|-1)}\\
&\cdot&
\left(
\left(\tilde{d}_{x,y}+\beta\right)^2+
\left(\tilde{d}_{x',y'}-\beta\right)^2
-{\tilde{d}}^2_{x,y}-{\tilde{d}}^2_{x',y'}
\right)\\
&=&
\left\lfloor\frac{|B(z,\delta nr)|}{2}\right\rfloor\cdot
\frac{1}{|B(z,\delta nr)|\cdot(|B(z,\delta nr)|-1)}
\cdot
\left(
2\beta\tilde{d}_{x,y}-2\beta\tilde{d}_{x',y'}
+2\beta^2
\right)\\
&>&0,
\end{eqnarray*}
where the inequality holds
%for all sufficiently large $n$
because $\tilde{d}_{x,y}\ge \tilde{d}_{x',y'}$ and
$\beta>0$.
%and $|B(z,\delta nr)|=(1-o(1))n$ by Lemma~\ref{thesmallradiusballislarge}.

In summary,
$\{\hat{d}_{u,v}\}_{u,v\in B(z,\delta nr)}$ is a feasible solution
achieving a greater
%objective
objective~(\ref{objectiveofoptimization})
than the optimal solution
$\{\tilde{d}_{u,v}\}_{u,v\in B(z,\delta nr)}$ does, a contradiction.
\end{proof}

%The following theorem
We now
%concludes this appendix.
bound the optimal value of
{\sc max square sum}.

\begin{theorem}
The optimal value of {\sc max square sum}
is at most
$$
\left\lfloor\frac{|B(z,\delta nr)|}{2}\right\rfloor\cdot
\frac{1}{|B(z,\delta nr)|\cdot(|B(z,\delta nr)|-1)}
\cdot
\left(
\left\lfloor
\frac{|B(z,\delta nr)|^2r'}{2\delta nr}
\right\rfloor+1
\right)
\cdot \left(2\delta nr\right)^2
$$
\end{theorem}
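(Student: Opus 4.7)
The plan is to read off the bound directly from Lemma~\ref{supportofoptimalsolution} together with the box constraint~(\ref{largestdistanceconstraint}). First I would observe that for the optimal solution $\{\tilde{d}_{u,v}\}_{u,v\in B(z,\delta nr)}$, every variable satisfies $0\le\tilde{d}_{u,v}\le 2\delta nr$, so $\tilde{d}_{u,v}^2\le (2\delta nr)^2$ whenever $\tilde{d}_{u,v}>0$, and of course $\tilde{d}_{u,v}^2=0$ otherwise. Therefore
$$
\sum_{u,v\in B(z,\delta nr)} \tilde{d}_{u,v}^2
\le
\left|\left\{(u,v)\in B^2(z,\delta nr)\mid \tilde{d}_{u,v}>0\right\}\right|
\cdot (2\delta nr)^2.
$$
Next, I would invoke Lemma~\ref{supportofoptimalsolution} to bound the size of the support by $\lfloor |B(z,\delta nr)|^2 r'/(2\delta nr)\rfloor+1$, giving
$$
\sum_{u,v\in B(z,\delta nr)} \tilde{d}_{u,v}^2
\le
\left(\left\lfloor\frac{|B(z,\delta nr)|^2 r'}{2\delta nr}\right\rfloor+1\right)\cdot (2\delta nr)^2.
$$

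Finally, multiplying both sides by the positive constant $\lfloor|B(z,\delta nr)|/2\rfloor\cdot 1/(|B(z,\delta nr)|\cdot(|B(z,\delta nr)|-1))$ that appears in objective~(\ref{objectiveofoptimization}) yields exactly the claimed upper bound on the optimal value.

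There is essentially no obstacle here: the heavy lifting has already been done in Lemmas~\ref{maximumnumberoflargestvaluevariables} and~\ref{supportofoptimalsolution}, which together formalize the intuition that a sum-of-squares objective under a linear (mean) constraint and a box constraint is maximized by concentrating mass on as few coordinates as possible, pushed up against the upper bound $2\delta nr$. The present theorem is just the arithmetic consequence of that structural fact combined with the per-coordinate cap.
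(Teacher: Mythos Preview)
Your proof is correct and essentially identical to the paper's: both bound each nonzero $\tilde{d}_{u,v}^2$ by $(2\delta nr)^2$ via constraint~(\ref{largestdistanceconstraint}) and then bound the number of nonzero terms by Lemma~\ref{supportofoptimalsolution}. The paper merely phrases the first step using an indicator function $\chi[\tilde{d}_{u,v}>0]$, but the content is the same.
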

\begin{proof}
W.r.t.\
%any feasible solution,
the optimal (and thus feasible) solution $\{\tilde{d}_{u,v}\}_{u,v\in B(z,\epsilon nr)}$,
objective~(\ref{objectiveofoptimization}) equals
\begin{eqnarray*}
&&\left\lfloor\frac{|B(z,\delta nr)|}{2}\right\rfloor\cdot
\frac{1}{|B(z,\delta nr)|\cdot(|B(z,\delta nr)|-1)}
\cdot
\sum_{u,v\in B(z,\delta nr)}\,
\chi\left[\tilde{d}_{u,v}\neq 0\right]\cdot {\tilde{d}}^2_{u,v}\\
&\stackrel{\text{(\ref{largestdistanceconstraint})}}{\le}&
\left\lfloor\frac{|B(z,\delta nr)|}{2}\right\rfloor\cdot
\frac{1}{|B(z,\delta nr)|\cdot(|B(z,\delta nr)|-1)}
\cdot
\sum_{u,v\in B(z,\delta nr)}\,
\chi\left[\tilde{d}_{u,v}>0\right]\cdot \left(2\delta nr\right)^2,
\end{eqnarray*}
where $\chi[P]=1$ if $P$ is true and $\chi[P]=0$ otherwise, for any
predicate $P$.
Now invoke Lemma~\ref{supportofoptimalsolution}.
%and constraint~(\ref{largestdistanceconstraint}).
\end{proof}

\comment{ % no, we don't do this. 20161223 12:59
By the triangle inequality,
\begin{eqnarray*}
&&\sum_{\text{distinct $u$, $v\in B(z,\delta nr)$}}\,
d\left(u,v\right)\\
&\le&
\sum_{\text{distinct $u$, $v\in B(z,\delta nr)$}}\,
\left(d\left(z,u\right)+d\left(z,v\right)\right)\\
&=&
\left(\left|B(z,\delta nr)\right|-1\right)\cdot
\sum_{u\in B(z,\delta nr)}\,
d\left(z,u\right)
+\left(\left|B(z,\delta nr)\right|-1\right)\cdot
\sum_{v\in B(z,\delta nr)}\,
d\left(z,v\right).
\end{eqnarray*}
}% no, we don't do this. 20161223 12:59

\bibliographystyle{plain}
\bibliography{las_vegas_median}

\begin{thebibliography}{10}

\bibitem{Cha13}
C.-L. Chang.
\newblock Deterministic sublinear-time approximations for metric $1$-median
  selection.
\newblock {\em Information Processing Letters}, 113(8):288--292, 2013.

\bibitem{Cha15}
C.-L. Chang.
\newblock A deterministic sublinear-time nonadaptive algorithm for metric
  $1$-median selection.
\newblock {\em Theoretical Computer Science}, 602:149--157, 2015.

\bibitem{Cha16COCOON}
C.-L. Chang.
\newblock Metric $1$-median selection: Query complexity vs.\ approximation
  ratio.
\newblock In {\em Proceedings of the 22nd International Computing and
  Combinatorics Conference}, pages 131--142, Ho Chi Minh City, Vietnam, 2016.
\newblock Full version at \url{https://arxiv.org/abs/1509.05662}.

\bibitem{Cha17}
C.-L. Chang.
\newblock A lower bound for metric $1$-median selection.
\newblock {\em Journal of Computer and System Sciences}, 84:44--51, 2017.

\bibitem{EW04}
D.~Eppstein and J.~Wang.
\newblock Fast approximation of centrality.
\newblock {\em Journal of Graph Algorithms and Applications}, 8(1):39--45,
  2004.

\bibitem{GR08}
O.~Goldreich and D.~Ron.
\newblock Approximating average parameters of graphs.
\newblock {\em Random Structures \& Algorithms}, 32(4):473--493, 2008.

\bibitem{GMMMO03}
S.~Guha, A.~Meyerson, N.~Mishra, R.~Motwani, and L.~O'Callaghan.
\newblock Clustering data streams: \uppercase{T}heory and practice.
\newblock {\em IEEE Transactions on Knowledge and Data Engineering},
  15(3):515--528, 2003.

\bibitem{Ind99}
P.~Indyk.
\newblock Sublinear time algorithms for metric space problems.
\newblock In {\em Proceedings of the 31st Annual ACM Symposium on Theory of
  Computing}, pages 428--434, 1999.

\bibitem{Ind00}
P.~Indyk.
\newblock {\em High-dimensional computational geometry}.
\newblock PhD thesis, Stanford University, 2000.

\bibitem{KSS10}
A.~Kumar, Y.~Sabharwal, and S.~Sen.
\newblock Linear-time approximation schemes for clustering problems in any
  dimensions.
\newblock {\em Journal of the ACM}, 57(2):5, 2010.

\bibitem{MP04}
R.~R. Mettu and C.~G. Plaxton.
\newblock Optimal time bounds for approximate clustering.
\newblock {\em Machine Learning}, 56(1--3):35--60, 2004.

\bibitem{MR95}
R.~Motwani and P.~Raghavan.
\newblock {\em Randomized Algorithms}.
\newblock Cambridge University Press, Cambridge, UK, 1995.

\bibitem{Rud76}
W.~Rudin.
\newblock {\em Principles of Mathematical Analysis}.
\newblock McGraw-Hill, 3rd edition, 1976.

\bibitem{WF94}
S.~Wasserman and K.~Faust.
\newblock {\em Social Network Analysis: Methods and Applications}.
\newblock Cambridge University Press, 1994.

\bibitem{Wu14}
B.-Y. Wu.
\newblock On approximating metric $1$-median in sublinear time.
\newblock {\em Information Processing Letters}, 114(4):163--166, 2014.

\end{thebibliography}

\noindent

\end{document}